\newcommand{\np}{{\mathsf{NP}}}
\newcommand{\fpt}{{\mathsf{FPT}}}
\newcommand{\xp}{{\mathsf{XP}}}
\newcommand{\wone}{{\mathsf{W[1]}}}
\newcommand{\wtwo}{{\mathsf{W[2]}}}
\newcommand{\w}{{\mathsf{W}}}
\newcommand{\p}{{\mathsf{P}}}
\newcommand{\mytabref}[1]{\autoref{#1}}
\newcommand{\N}{{\mathbb{N}}}
\newcommand{\Q}{{\mathbb{Q}}}
\newtheorem{theorem}{Theorem}
\newtheorem{lemma}{Lemma}
\newtheorem{proposition}{Proposition}
\newtheorem{corollary}{Corollary}
\newtheorem{conjecture}{Conjecture}
\DeclareMathOperator{\cl}{cl}
\DeclareMathOperator{\dcl}{cl_{det}}
\DeclareMathOperator{\idcl}{\ensuremath{cl_{det}^{-1}}}
\DeclareMathOperator{\argmin}{argmin}
\def\NAT@spacechar{~}
\newcommand{\probDef}[3]{
  \begin{quote}
    #1 \\
    \textbf{Input:} #2 \\
    \textbf{Question:} #3
  \end{quote}
}
\newcommand{\probSharpDef}[3]{
  \begin{quote}
    #1 \\
    \textbf{Input:} #2 \\
    \textbf{Compute:} #3
  \end{quote}
}
\newcommand{\probSetCover}{\textsc{Set Cover}\xspace}
\newcommand{\probDominatingSet}{\textsc{Dominating Set}\xspace}
\newcommand{\probColorClique}{\textsc{Multi-Colored Clique}\xspace}
\newcommand{\probSTConnectness}{\textsc{s-t~Connectedness}\xspace}
\newcommand{\probMaxIndependentSet}{\textsc{Independent Set}\xspace}
\newcommand{\probClosure}{\textsc{Maximum Weight Closure}\xspace}
\newcommand{\probEffectors}{\textsc{Effectors}\xspace}
\newcommand{\probCost}{\textsc{Effectors-Cost}\xspace}
\begin{document}

\title{The Complexity of Finding Effectors\thanks{An extended
    abstract appeared in \emph{Proceedings of the 12th Annual
      Conference on Theory and Applications of Models of Computation
      (TAMC '15)}, Volume~9076 of LNCS, pages 224--235, Springer, 2015. This article provides all proofs in full detail.}}

\author{Laurent Bulteau\thanks{(Laurent.Bulteau@u-pem.fr) Supported by the Alexander von Humboldt Foundation, Bonn, Germany.
  Main work done while affiliated with  TU~Berlin.}}
  
\affil{IGM-LabInfo, CNRS UMR 8049, Universit\'e Paris-Est Marne-la-Vall\'ee, France.}
    
\author{Stefan Fafianie\thanks{(stefan.fafianie@tu-berlin.de) Supported by the DFG Emmy Noether-program (KR 4286/1).
    Main work done while affiliated with TU~Berlin.}}

\affil{Institut f\"ur Informatik, Universit\"at Bonn, Germany.}

\author{Vincent Froese\thanks{(vincent.froese@tu-berlin.de) Supported by the DFG, project DAMM (NI 369/13).}}
\author{Rolf Niedermeier\thanks{(rolf.niedermeier@tu-berlin.de)}}
\author{Nimrod Talmon\thanks{(nimrodtalmon77@gmail.com) Supported by DFG Research Training Group ``Methods for Discrete Structures''~(GRK~1408).}}

\affil{Institut f\"ur Softwaretechnik und Theoretische Informatik, TU Berlin, Germany.}

\date{}

\maketitle

\begin{abstract}
The NP-hard \probEffectors problem on directed graphs is motivated by applications in network mining,
particularly concerning the analysis of probabilistic information-propagation processes in social networks.
In the corresponding model the arcs carry probabilities
and there is a probabilistic diffusion process activating 
nodes by neighboring activated nodes with probabilities as specified by the arcs. 
The point is to explain a given network activation state as well as possible by
using a minimum number of ``effector nodes'';
these are selected before the activation process starts.

We correct, complement, and extend 
previous work from the data mining community by a 
more thorough computational complexity analysis of \probEffectors,
identifying both tractable and intractable cases.
To this end, we also exploit a parameterization measuring the ``degree
of randomness'' (the number of `really' probabilistic arcs) 
which might prove useful for analyzing
other probabilistic network diffusion problems as well.
\end{abstract}

\section{Introduction}

To understand and master the dynamics
of information propagation in networks (biological, chemical,
computer, information, social) is a core research topic in data mining and related fields.
A prominent problem in this context is the $\np$-hard 
problem \probEffectors~\cite{LTGMH10}: The input is a 
directed (influence) graph with a subset of nodes 
marked as active (the target nodes) and each arc of 
the graph carries an 
influence probability greater than~0 and at most~1.
Assuming a certain diffusion process on the graph,
the task is to  find few ``effector nodes'' that can ``best explain'' 
the set of given active nodes,
that is, the activation state of the graph.

Specifically,
consider a set of nodes in the graph which are initially active.
Then,
due to a certain diffusion process,
several other nodes in the graph,
which initially were not active,
might become active as a result.
The diffusion model we consider
(and which is known as the independent cascade model~\cite{KKT15})
is such that,
at each time step,
a newly activated node
(initially only the chosen effectors are active)
has one chance to activate each non-active out-neighbor with the corresponding arc probability.
If an out-neighbor was successfully activated in the last time step,
then the propagation continues and this node has the chance to further activate its out-neighbors.
The propagation process terminates when there are no newly activated nodes.
\autoref{fig:intro-example} shows an example of a possible propagation process.
Given the activation state of the graph at the end of the propagation process,
we ask for the set of nodes, the effectors, which could best explain the current activation state.

\begin{figure}
  \centering
  \begin{tikzpicture}[>=stealth,scale=0.6]
      \tikzstyle{active}=[circle,draw,fill=black,minimum size=5pt,inner sep=3pt]
      \tikzstyle{inactive}=[circle,draw,minimum size=5pt,inner sep=3pt]

      \node at (0,1) {$t=0$};
      \node[active] (v1) at (0,6) {};
      \node[inactive] (v2) at (1.5,4) {};
      \node[inactive] (v3) at (-1.5,4) {};
      \node[inactive] (v4) at (0,2) {};

      \draw[->, very thick] (v1) -- (v2) node[midway, right] {0.5};
      \draw[->] (v1) -- (v3) node[midway, left] {0.8};
      \draw[->] (v2) -- (v3) node[midway, above] {0.1};
      \draw[->] (v3) -- (v4) node[midway, left] {1};
      \path[->] (v2) edge[bend right] node[midway, left] {0.3} (v4);
      \path[->] (v4) edge[bend right] node[midway, right] {0.9} (v2);
    \end{tikzpicture}
    \hspace{2em}
    \begin{tikzpicture}[>=stealth,scale=0.6]
      \tikzstyle{active}=[circle,draw,fill=black,minimum size=5pt,inner sep=3pt]
      \tikzstyle{inactive}=[circle,draw,minimum size=5pt,inner sep=3pt]

      \node at (0,1) {$t=1$};
      \node[active] (v1) at (0,6) {};
      \node[active] (v2) at (1.5,4) {};
      \node[inactive] (v3) at (-1.5,4) {};
      \node[inactive] (v4) at (0,2) {};

      \draw[->,help lines] (v1) -- (v2) node[midway, right] {0.5};
      \draw[->, help lines] (v1) -- (v3) node[midway, left] {0.8};
      \draw[->] (v2) -- (v3) node[midway, above] {0.1};
      \draw[->] (v3) -- (v4) node[midway, left] {1};
      \path[->, very thick] (v2) edge[bend right] node[midway, left] {0.3} (v4);
      \path[->] (v4) edge[bend right] node[midway, right] {0.9} (v2);
    \end{tikzpicture}
    \hspace{2em}
    \begin{tikzpicture}[>=stealth,scale=0.6]
      \tikzstyle{active}=[circle,draw,fill=black,minimum size=5pt,inner sep=3pt]
      \tikzstyle{inactive}=[circle,draw,minimum size=5pt,inner sep=3pt]

      \node at (0,1) {$t=2$};
      \node[active] (v1) at (0,6) {};
      \node[active] (v2) at (1.5,4) {};
      \node[inactive] (v3) at (-1.5,4) {};
      \node[active] (v4) at (0,2) {};

      \draw[->,help lines] (v1) -- (v2) node[midway, right] {0.5};
      \draw[->, help lines] (v1) -- (v3) node[midway, left] {0.8};
      \draw[->, help lines] (v2) -- (v3) node[midway, above] {0.1};
      \draw[->] (v3) -- (v4) node[midway, left] {1};
      \path[->, help lines] (v2) edge[bend right] node[midway, left] {0.3} (v4);
      \path[->, help lines] (v4) edge[bend right] node[midway, right] {0.9} (v2);
    \end{tikzpicture}
    \caption{An example depicting the information propagation according to the independent cascade model. The influence graph is a directed graph where the arcs are labeled with influence probabilities. Initially, at time~$t=0$, only the top node is active (black) and has a chance to independently activate the left and right node with the corresponding arc probabilities.
      In the example, the right node is activated (thick arc) while the left node is not. The probability of this event is thus~$0.5\cdot (1-0.8)$. The propagation then continues and the right node has a chance to activate its out-neighbors at time~$t=1$. Every activated node has only one chance (namely, after it became active the first time) to activate other inactive nodes. Note that at time~$t=2$ the bottom node cannot activate any new nodes. Hence, the propagation process terminates. The overall probability of this particular propagation (and of this particular activation state) equals $0.5\cdot(1-0.8)\cdot 0.3\cdot(1-0.1)=0.027$.
    }
  \label{fig:intro-example}
\end{figure}
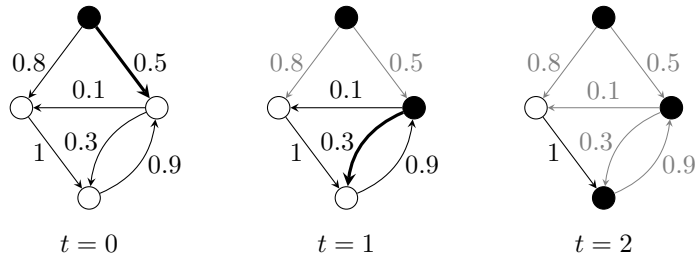

Being able to efficiently compute the set of effector nodes is helpful in many scenarios.
The paper by~\citet{LTGMH10} mentions several of them,
including being able to better understand how information propagates in social networks,
or finding those countries which are more prominent for spreading epidemics
(here, one might assume a graph where each country is a node,
and,
given the current state of some plague,
the effector nodes are those countries which explain this current state).
Motivated also by the scenario from epidemics,
one might be interested in providing shields against such plagues.
One possible way to achieve this is by finding the set of effectors,
and vaccinating the people in those countries.
Taking monetary costs into account,
it is desirable to find a small set of effectors;
thus,
in the \probEffectors problem,
the goal is to find a set of effectors of small size.

It is important to note that we allow effectors to be chosen from
the \emph{whole} set of graph nodes and not only from the set of target nodes.
This makes our model, in a sense, more general than the original one by~\citet{LTGMH10}.\footnote{We
conjecture that both models coincide if we are allowed to choose an unlimited number of effectors,
that is,
if the number of chosen effectors does not matter.
On the contrary, they do not coincide if the number of effectors is bounded,
see~\autoref{sect:prelim}.}
See~\autoref{sect:prelim} for definitions of the main problems,
formal definition of our model, and a discussion about our model and its difference to that of~\citet{LTGMH10}.

Our main contribution is to extend and clarify research 
on the computational complexity status of \probEffectors, which has 
been initiated by~\citet{LTGMH10}.
In short,~\citet{LTGMH10} have shown that \probEffectors is generally $\np$-hard and hard to approximate,
developed an algorithm that is efficient on trees,
and used it to develop an efficient heuristic.
As \emph{probabilistic} information propagation is central in the independent cascade information-propagation model
which is in the heart of the \probEffectors problem (as well as in several other information-propagation models),
we put particular emphasis on studying how the ``degree of randomness''
in the network governs the computational complexity.
Moreover, compared to previous work, we make an effort to present
the results in a more formal setting, conducting a rigorous mathematical analysis.

Informally speaking (concrete statements of our results appear 
in \autoref{sect:prelim} after having provided formal definitions), 
we have gained the following main insights (also refer to \autoref{table:results} in
\autoref{sect:prelim}).
\begin{itemize}
  \item With unlimited degree of randomness, finding effectors is 
  computationally very hard. In fact, even computing the ``cost'' (how well 
  does a set of effectors explain a given activation state) of a \emph{given}
  set of effectors is intractable.
  This significantly differs from
  deterministic models.
  \item Even if the directed input graph is acyclic, then this does \emph{not}
  lead to a significant decrease of the computational complexity. 
  \item Bounding the degree of randomness (in other words, bounding the number 
  of arcs with probability different from~1), that is,
  \emph{parameterizing on the degree of randomness}, yields some 
  encouraging (fixed-parameter) tractability 
  results for otherwise intractable cases.
  \item We identify some flaws in the work of \citet{LTGMH10}
  (see~\autoref{sect:flawDetails} for details),
  who claim one case to be intractable which in fact is tractable and one case the other
  way around.
\end{itemize}

Admittedly, in real-world applications (where influence probabilities
are determined through observation and simulation, often involving noise) 
the number of probabilistic arcs may be high, thus, at first sight,
rendering the parameter ``number of probabilistic arcs'' doubtful.
However, note that finding effectors is computationally very hard
(also in terms of polynomial-time approximability; the approximation hardness of \probEffectors 
is mentioned by~\citet{LTGMH10} and follows, for example, from the reductions which use the \probSetCover problem).
So, in order to make the computation of a solution more feasible one might 
round up (to~1) arc probabilities which are close to~1 and round down (to~0) arc
probabilities which are close to~0. Thus, one can achieve a trade-off between running time and accuracy of the result.
Depending on the degree of rounding 
(as much as a subsequent fixed-parameter algorithm exploiting 
the mentioned parameter would ``allow''),
in this way one might at least find
a good approximation of an optimal set of effectors in reasonable time.

\paragraph{Related work.}
Our main point of reference is the work of~\citet{LTGMH10}.
Indeed, we use a slightly different problem definition:
They define the effectors to be necessarily a subset of the target nodes, 
whereas we allow the effectors to form an arbitrary
subset of the nodes. It turns out that these two definitions really
yield different problems, in the sense that a solution for one problem might not be a solution for the other
(see~\autoref{sec:Model} for an extensive discussion of the differences between these two models
and for an explanation on why we have chosen to define our model as it is defined).

The special case where all nodes are target nodes 
(and hence where the two models above clearly coincide) is
called \textsc{Influence Maximization} and is well studied in the
literature~\cite{BSKDSM07, DPRM01, KKT15}.
Specifically,
it is known that the \textsc{Influence Maximization} problem is $\np$-hard,
and a polynomial-time $(1 - 1 / e)$-approximation algorithm for this problem is given by~\citet{KKT15}.

Finally, a closely related deterministic version (called
\textsc{Target Set Selection}) with the 
additional difference of having node-individual 
thresholds specifying how many neighboring nodes need to be active 
to make a node active has also been extensively studied,
in particular from a parameterized complexity point of 
view~\cite{BCNS14-comp,BCNS14-jda,Ben-ZwiHLN11,CNNW14,NNUW13}.
\textsc{Target Set Selection} is $\np$-hard in general,
and hard to approximate, also in the parameterized sense
(specifically,
cannot be approximated even in $\fpt$-time (see~\autoref{sect:prelim}) with respect to the solution size).
It is $\np$-hard even on graphs of diameter $2$~\cite{NNUW13},
and it is tractable on some restricted graph classes such as trees~\cite{Ben-ZwiHLN11} and cliques~\cite{NNUW13}.

\section{Preliminaries}\label{sect:prelim}
In this section, we provide definitions used throughout the work.
We basically use the same definitions as~\citet{LTGMH10},
except for few differences in notation.

\paragraph{Graph Theory.}
We consider simple directed graphs~$G=(V,E)$ with a set~$V$ of nodes
and an arc set~$E\subseteq \{u\to v \mid u,v\in V, u\neq v\}$.
If there is an arc~$u\to v\in E$, then we call~$u$ an \emph{in-neighbor} of~$v$ and we call~$v$ an \emph{out-neighbor} of~$u$. For a subset~$V'\subseteq V$, we denote by~$G[V']:=(V', E')$ the subgraph of~$G$ induced by~$V'$, where $E':=\{u\to v\in E\mid u,v\in V'\}$.
An undirected graph~$G=(V,E)$ consists of a vertex set~$V$ and an edge set~$E\subseteq\{\{u,v\}\mid u,v\in V, u\neq v\}$.

We use the acronym DAG for directed acyclic graphs.
An undirected \emph{tree} is a connected acyclic graph.
A \emph{directed tree} is an arbitrary orientation of an undirected tree.
The \emph{condensation} of a directed graph~$G$ is a DAG containing a node~$v_C$ for each
strongly connected component~$C$ of~$G$ and there is an arc~$v_C\to v_{C'}$
if and only if there exists at least one arc from a node in~$C$ to a node in~$C'$.

\paragraph{Influence Graphs.}
An \emph{influence graph}~$G=(V,E,w)$ is a simple directed
graph equipped with a function~$w:E\rightarrow (0,1]\cap\Q$
assigning an \emph{influence weight} to each arc $u\to v\in E$
which represents the \emph{influence of node $u$ on node $v$}. Strictly speaking,
the influence is the probability that~$u$ propagates some information to~$v$.
We denote the number of nodes in $G$ by $n := |V|$ and the
number of arcs in $G$ by~$m := |E|$.

\paragraph{Information Propagation.}
We consider the following information-propaga\-tion process,
called the \emph{Independent Cascade (IC)} model \cite{KKT15}.
Within this model, each node is in one of two states: \emph{active} or \emph{inactive}.
When a node~$u$ becomes active for the first time, at time step~$t$,
it gets a single chance to activate its inactive out-neighbors.
Specifically, $u$ succeeds in activating a neighbor~$v$ with probability~$w(u \to v)$.
If $u$ succeeds, then $v$ will become active at step $t + 1$.
Otherwise, $u$ cannot make any more attempts to activate $v$ in any subsequent round.
The propagation process terminates when there are no newly activated nodes,
that is,
when the graph becomes static.

We remark that,
since our algorithms need to manipulate the probabilities determined by the function~$w$,
technically (and as usually)
we assume that the precision of the probabilities determined by this function 
is polynomially upper-bounded in the number~$n$ of nodes of the input graph,
and we ignore the time costs for adding or multiplying rational numbers assuming that these operations take constant time.

\paragraph{Cost Function.}
For a given influence graph~$G=(V,E,w)$,
a subset $X \subseteq V$ of effectors,
and a subset $A \subseteq V$ of active nodes,
we define a cost function
\[
  C_A(G,X) := \sum_{v \in A}\left(1-p(v|X)\right)+\smashoperator{\sum_{v \in V\setminus A}}p(v|X),
\]
where for each $v \in V$, we define $p(v|X)$ to be the probability of
$v$ being active after the termination of the information-propagation process starting
with~$X$ as the active nodes.
An alternative definition is that $C_A(G,X) := \sum_{v \in V} C_A(v,X)$,
where $C_A(v,X) := 1 - p(v|X)$ if $v \in A$ and $C_A(v,X) := p(v|X)$ if $v \notin A$.
One might think of this cost function as computing the expected number of nodes
which are incorrectly being activated or unactivated.

\paragraph{Main Problem Definition.}
Our central problem \probEffectors is formulated as a decision problem---it 
relates to finding few nodes which best explain (lowest cost) the given 
network activation state specified by a subset~$A\subseteq V$ of nodes.
\probDef
  {\probEffectors}
  {An influence graph $G = (V, E, w)$, a set of target nodes $A \subseteq V$, a budget $b \in \N$, and a cost $c \in \Q$.}
  {Is there a subset $X \subseteq V$ of effectors with $|X| \leq b$ and cost $C_A(G,X) \leq c$?}
We will additionally consider the related problem 
\probCost (see~\autoref{section:computingCost}) where the set~$X$ 
of effectors 
is already given and one has to determine its cost.

\paragraph{Parameters.}
The most natural parameters to consider for a parameterized computational complexity 
analysis are the maximum number~$b$ of
effectors, the cost value~$c$, and the number~$a:=|A|$ of target nodes.
Moreover, we will be especially interested in quantifying the amount of
randomness in the influence graph.
To this end, consider an arc $u \to v\in E$:
if $w(u \to v)= 1$,
then this arc is not probabilistic.
We define the parameter number~$r$ of probabilistic arcs, that is,
$r:= |\{u \to v\in E : w(u \to v) < 1\}|$.

\paragraph{Parameterized Complexity.}
We assume familiarity with the basic notions of algorithms and complexity.
Several of our results will be cast using the framework of parameterized complexity analysis.
An instance~$(I,k)$ of a parameterized problem consists of the classical
instance~$I$ and an integer~$k$ being the \emph{parameter} \cite{DF13,FG06,Nie06,Cyg15}.
A parameterized problem is called \emph{fixed-parameter tractable} (FPT) if there is an algorithm solving it in~$f(k)\cdot|I|^{O(1)}$ time, whereas an algorithm with running time~$|I|^{f(k)}$ only shows membership in the class~XP (clearly, FPT${}\subseteq{}$XP).
One can show that a parameterized problem~$L$ is (under certain complexity-theoretic assumptions) not fixed-parameter tractable by devising a \emph{parameterized reduction} from a
W[1]-hard or W[2]-hard problem (such as \textsc{Clique} or \textsc{Set
Cover}, respectively, each parameterized by the solution size) to~$L$.
A parameterized reduction from a parameterized problem~$L$ to another parameterized problem~$L'$ is a function that, given an instance~$(I,k)$, computes in~$f(k)\cdot |I|^{O(1)}$ time an instance~$(I',k')$~with~$k' \le g (k)$ such that~$(I,k)\in L \Leftrightarrow (I',k')\in L'$.
The common working hypothesis is that FPT${}\neq{}$W[1]. In fact, it is assumed that there is an infinite hierarchy
$$\text{FPT}\subset\text{W[1]}\subset\text{W[2]}\subset\ldots$$
called the $W$-hierarchy. Thus, for a parameterized problem to be~W[2]-hard is even stronger
in the sense that even if~FPT${}={}$W[1] holds, it is still possible that~FPT${}\neq{}$W[2].

\paragraph{Counting Complexity.}
We will also consider so called counting problems of the form ``Given~$x$, compute $f(x)$.'', where~$f$ is some function~$\{0,1\}^*\to\mathbb{N}$ (see \citet[Chapter~9]{AB09} for an introduction to counting complexity).
The class~$\#\p$ consists of all such functions~$f$ such that~$f(x)$ equals the number of accepting computation paths of a nondeterministic polynomial-time Turing machine on input~$x$.
Informally speaking, we can associate a decision problem in $\np$ (which asks weather there exists a solution or not) with a counting problem in~$\#\p$ (which asks for the number of solutions).
Clearly, if all counting problems in~$\#\p$ can be solved in polynomial time, then this implies $\p=\np$.
Analogously to $\np$-hardness, showing that a function
is $\#\p$-hard gives strong evidence for its computational intractability.
A function~$f:\{0,1\}^*\to\{0,1\}^*$ is $\#\p$-hard if a polynomial-time algorithm for~$f$ implies that all counting problems in~$\#\p$ are polynomial-time solvable.

\paragraph{Organization.}
Before we discuss our model and the one by \citet{LTGMH10}, 
we overview our main results in \autoref{table:results}.
We will treat the sub-problem \textsc{Eff\-ectors-Cost}
in~\autoref{section:computingCost},
and \probEffectors in~\autoref{section:findingEffectors}.
Note that most of our results transfer to the model of \citet{LTGMH10}.
In particular, this implies that their claims that the ``zero-cost'' 
special case is $\np$-hard \cite[Lemma~1]{LTGMH10} and that the deterministic version is polynomial-time solvable
are both flawed, because from our results exactly the opposite follows
(see the last part of~\autoref{sect:flawDetails} for details).

  \newcommand{\with}[1]{ {\scriptsize $[$wrt.~$#1]$}}
  \newcommand{\see}[1]{ {\scriptsize \mytabref{#1}}}
  \newcommand{\seecite}[1]{ {\scriptsize \cite{#1}}}
  \newcommand{\clippedPath}[2][] {
    \begin{scope}
      \clip #2;
      \draw[#1] #2;
    \end{scope}
    \draw[white, line width=1pt] #2;
  }
  
  \newcommand{\resultBox}[2]{    
    \clippedPath[line width=4pt, #1!75!black, fill= #1!5]{#2}    
  }
  
  \newcommand{\titleBox}[1][1]{    
    \clippedPath[line width=3pt, black ]{(-1.4,0) -- (-1.4,#1) -- (0,#1) -- (0,0) --cycle }        
  } 
  \newcommand{\titleNode}[2][0.5]{    
    \node[rectangle,align=center] () at (-0.7,#1) {#2};    
  }  
  
  \newcommand{\headBox}[2]{    
    \clippedPath[line width=3pt, black ]{(#1,0) -- (#2,0) -- (#2,1.1) -- (#1,1.1) --cycle }        
  }   
  \newcommand{\headNode}[2]{    
    \node[multiline] () at (#1,0.55) {#2};    
  } 
 \begin{table}[t]
  \centering
  \caption{Computational complexity of the different variants of \probEffectors.
    Note that all hardness results hold also for DAGs.
    The parameter $a$ stands for the number of active nodes,
    $b$ for the budget,
    $c$ for the cost value,
    and $r$ for the number of probabilistic arcs.}
  \label{table:results}
  \begin{tikzpicture}[>=stealth, xscale=1.65,yscale=0.9]
    \tikzstyle{multiline}=[rectangle,align=center]
    
    \begin{scope}[yshift=1cm] 
      \headBox02
      \headNode1{Deterministic\\{\footnotesize ($r=0$)}}
      \headBox24
      \headNode3{Parameterized\\{\footnotesize (by $r$)}}
      \headBox46
      \headNode5{Probabilistic\\{\footnotesize (arbitrary $r$)}}      
    \end{scope}
    \begin{scope}
      \titleBox \titleNode{{\sc Effectors-}\\{\sc Cost} };
      \resultBox{green}{(0,0) -- (0,1) -- (4,1) -- (4,0) --cycle }     
      \resultBox{red}{(4,0) -- (4,1) -- (6,1) -- (6,0) --cycle }
      \node () at (2,0.5) {$\fpt$\with{r},\see{thm:costFPTr}};
      \node () at (5,0.5) {$\#\p$-hard,\see{thm:costNP}};      
    \end{scope}
    
    \begin{scope}[yshift=-2cm]    
      \titleBox[2]   
      \titleNode[1] {\probEffectors \\ (general case)};
      \resultBox{red}{(0,0) -- (0,1) -- (2,1) -- (2,2) -- (6,2) -- (6,0) --cycle }
      \node[multiline] () at (4.0,0.9) { $\w[2]$-hard\with{b+c}, \see{thm:combinedHardness} \\ $\w[1]$-hard\with{a+b+c}, \see{thm:combinedHardness} };
      \resultBox{blue}{(0,1) -- (0,2) -- (2,2) -- (2,1) --cycle }     
      \node[multiline] () at (1,1.5) {$\xp$\with{\min(a,b,c)},\\[-0.15em]\see{prop:generalZEROrXP}};
    \end{scope}

    \begin{scope}[yshift=-3cm]
      \titleBox   
      \titleNode {Infinite budget\\$(b=\infty)$};
      
      \resultBox{green}{(0,0) -- (0,1) -- (4,1) -- (4,0) --cycle }     
      \resultBox{red}{(4,0) -- (4,1) -- (6,1) -- (6,0) --cycle }
      \node () at (2,0.5) {$\fpt$\with{r},\see{thm:inftyFPTr}};
      \node[multiline] () at (5,0.5) {\small $\np$-hard,\see{thm:inftyNP}\\ \small open: $\fpt$[wrt. $a$ or~$c$]}; 
    \end{scope}

    \begin{scope}[yshift=-5cm]    
      \titleBox[2]   
      \titleNode[1] {Influence\\Maximization\\$(A=V)$};
      \resultBox{red}{(0,0) -- (0,1) -- (2,1) -- (2,2) -- (6,2) -- (6,0) --cycle }    
      \node[multiline] () at (3.75,1) {$\w[1]$-hard\with{\min(b,c)},\see{thm:infmax}};
      \resultBox{green}{(0,1) -- (0,2) -- (2,2) -- (2,1) --cycle }     
      \node[multiline] () at (1,1.55) {$\fpt$\with{b+c},\\[-0.4em]\see{thm:infmax}};
    \end{scope}
  \end{tikzpicture}
\end{table}

\section{Model Discussion}
\label{sec:Model}
Our definition
of \probEffectors differs from the problem definition of
\citet{LTGMH10} in that we do not require the effectors to be chosen
among the target nodes.
Before pointing out possible advantages and motivating our problem definition,
we give a simple example illustrating the difference between these two definitions.

Consider the influence graph in \autoref{fig:ModelDiscExample},
consisting of one non-target node (white) having three outgoing arcs
with probability~1 each to three target nodes (black).
Clearly, for~$b=c=1$, this is a ``no''-instance if we are
only allowed to pick target nodes as effectors since
the probability of being active will be~0 for two of the three target
nodes in any case, which yields a cost of at least~2.
According to our problem definition, however, we are allowed to select the
non-target node, which only incurs a cost of~1, showing that this
is a ``yes''-instance.

Let us compare the two models.
First,
we think that our model captures the natural assumption that
an effector node does not have to remain active forever\footnote{Notably, in our model it actually remains active. The point is that before the whole computation starts (and after it ends) nodes may (have) become inactive again. Still, ``temporary activeness'' may make a node an effector that helps
explaining the currently observed network activation state.}.
Indeed, the modeling of \citet{LTGMH10} might be interpreted as a
``monotone version'' as for example discussed by \citet{ABS14}, 
while in this sense our model allows for ``non-monotone explanations''.
Second,
our model is more resilient to noise;
consider, for example, \autoref{fig:ModelDiscExample}.
It might be the case that indeed the top node is activated,
however,
due to noisy sampling methods,
it looks to us as if this top node is inactive.
In this simple example,
a solution according to the model of \citet{LTGMH10} would have to use three effectors to wrongly explain the data,
while a solution according to our model would be compute a correct and optimal solution with only one effector.

Clearly, if all nodes are target nodes (this particular setting is called \textsc{Influence Maximization}), 
then the two models coincide. 
Furthermore, we strongly conjecture that if we have an unlimited budget,
then it suffices to search for a solution
among the target nodes, that is, for~$b=\infty$, we believe that the two
problem definitions are also equivalent:

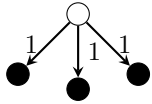
\begin{SCfigure}[2.3][t]
  \centering
  \begin{tikzpicture}[>=stealth, scale=0.8]
    \tikzstyle{active}=[circle,draw,fill=black,minimum size=5pt,inner sep=3pt]
    \tikzstyle{inactive}=[circle,draw,minimum size=5pt,inner sep=3pt]
    \node[inactive] (V1) at (1,1.5) {};
    \node[active] (V2) at (0,0.5) {};
    \node[active] (V3) at (1,0.25) {};
    \node[active] (V4) at (2,0.5) {};

    \draw[thick,->] (V1) -- (V2) node[midway, left] {1};
    \draw[thick,->] (V1) -- (V3) node[midway, right] {1};
    \draw[thick,->] (V1) -- (V4) node[midway, right] {1};
  \end{tikzpicture}
  \label{fig:ModelDiscExample}
  \caption{Example where it is optimal to choose a non-target node as effector.}
\end{SCfigure}

\begin{conjecture}
  \label{con:infinite-budget}
  For $b = \infty$, it holds that every ``yes''-instance~$(G,A,b,c)$ of
  \probEffectors has a solution~$X \subseteq A$.
\end{conjecture}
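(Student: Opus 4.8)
The plan is to show that any effector set~$X \subseteq V$ can be transformed into a set~$X' \subseteq A$ that is no worse, without increasing the cost (since the budget is unlimited we need not worry about~$|X'|$). The natural candidate is $X' := X \cap A$, together with possibly some additional target nodes; more precisely, I would try $X' := A \cap \dcl(X)$ or even $X' := A$ restricted to the nodes that are ``reliably reached'' from~$X$. The intuition is that a non-target effector $v \in X \setminus A$ can only help if it boosts the activation probability of some target nodes; but anything $v$ reaches via arcs, a suitably chosen target node along the way can also reach, and meanwhile $v$ itself contributes a full unit of cost because $v \notin A$. So one should ``push'' each non-target effector forward along its out-arcs onto target nodes.

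Concretely, I would argue as follows. First, observe the monotonicity of the activation probabilities: if $X \subseteq Y$ then $p(v \mid X) \le p(v \mid Y)$ for every~$v$; this follows from a standard coupling of the independent-cascade process (fix the random ``live/dead'' outcome of every arc simultaneously, then the set of activated nodes is monotone in the seed set). Using this, adding target nodes to an effector set can only decrease the first sum $\sum_{v \in A}(1-p(v\mid X))$ and can only increase the second sum $\sum_{v \in V \setminus A} p(v\mid X)$. The delicate point is therefore to control the second sum: we must add enough target nodes to preserve the gains on~$A$, while removing the non-target effectors to recoup their unit costs, and we must ensure the net effect on non-target nodes is not too large. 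I would split $V \setminus A$ into the removed effectors $X \setminus A$ (each of which saves exactly~$1 - p(v\mid X') \ge 0$, i.e.\ at least the amount $1 - p(v\mid X)$ they cost, modulo re-activation from the new seeds) and the remaining non-target nodes, whose probability might go up when we add new seeds.

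The cleanest route is probably a careful accounting per connected piece of the ``deterministic reachability'' structure, or an exchange argument applied one effector at a time: take $v \in X \setminus A$, and replace $v$ by the set of target nodes that $v$ can reach ``for free'' (along probability-1 paths), or argue that if $v$ has no target node reachable from it at all, then $v$ only ever adds cost and can simply be deleted. The main obstacle I anticipate is exactly the bookkeeping for the non-target nodes that are \emph{not} themselves effectors: when we add target seeds to compensate for a removed non-target effector, these new seeds may raise $p(u\mid \cdot)$ for some non-target~$u$ that the original effector~$v$ was \emph{not} reaching (because the new seeds sit ``downstream'' of~$v$ but have their own other out-neighbors). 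Handling this requires showing that the probability mass placed on such~$u$ is dominated by the $1$-unit saved per removed non-target effector, which is the heart of why the conjecture is stated as a conjecture rather than a theorem; a complete proof likely needs a global potential/charging argument rather than a naive local swap, and I would expect to invoke the deterministic-closure operator $\dcl$ and its inverse $\idcl$ introduced in the paper to make the ``free reachability'' notion precise.
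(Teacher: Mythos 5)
You have not produced a proof, and indeed the paper does not contain one either: this statement is \autoref{con:infinite-budget}, which the paper leaves open in general and establishes only for directed trees (\autoref{thm:conjectureForTrees}). Your sketch is honest about where it breaks down, and the place you point to --- controlling the increase of $p(u\mid\cdot)$ on non-target nodes $u$ that sit downstream of the newly added target seeds but were not influenced by the removed non-target effector --- is exactly the obstruction that prevents the argument from going through on general graphs. So the proposal, as written, is a plan with an acknowledged hole at its center, not a proof; it cannot be accepted as establishing the conjecture.

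Two more concrete issues with the plan itself. First, your candidate replacement set $X' := A\cap\dcl(X)$ (targets reachable along probability-$1$ paths) is the wrong object: a non-target effector $x$ may raise the activation probability of target nodes only via \emph{probabilistic} arcs, so restricting to the deterministic closure loses those gains. The tree proof in the paper instead takes the ``frontier'' targets $A_x'$ --- the first target nodes on paths out of $x$, irrespective of arc weights --- and, crucially, keeps only the subset $A^*=\{v\in A_x'\mid \delta_v(X')>0\}$ of frontier targets whose marginal contribution is positive; adding a frontier target with $\delta_v\le 0$ can strictly increase the cost, so an unfiltered ``push everything forward'' swap fails even on trees. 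Second, the paper's argument leans essentially on uniqueness of paths: it factors $p(u|X)-p(u|X')=\bigl(p(v|X)-p(v|X')\bigr)\bigl(p(u|v,X')-p(u|\overline{v},X')\bigr)$ because every influence from $x$ on $u\in\cl(v)$ must pass through $v$. On a DAG with multiple $x$--$u$ paths that bypass $v$, this conditional-independence factorization is false, and your per-effector exchange has no analogous identity to rest on. If you want to make progress beyond trees, the factorization step is what must be replaced, not merely the bookkeeping on non-target nodes.
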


At least for directed trees
(that is, the underlying undirected graph is a tree---these also have
been studied by \citet{LTGMH10}) 
we can prove~\autoref{con:infinite-budget}.
The idea of proof is that if an optimal solution contains a non-target
node~$v$, then this node only influences nodes reachable from it via paths that do not visit other nodes in the solution. Within this smaller tree of influenced nodes there must be some subtrees rooted at target nodes such that the expected cost for such a subtree is smaller if its target root node is activated during the propagation process compared to the case when it is not. Choosing these target nodes directly as effectors, replacing the non-target node~$v$, yields another optimal solution with fewer non-target~nodes.

\begin{theorem}\label{thm:conjectureForTrees}
  \autoref{con:infinite-budget} holds for directed trees.
\end{theorem}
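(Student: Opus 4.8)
Since $b=\infty$ imposes no size bound, it suffices to show that some minimum-cost effector set (these exist, as there are only finitely many) is contained in $A$. The plan is a local exchange argument: among all minimum-cost effector sets, pick one, $X^*$, using the fewest non-target nodes; if $X^*\subseteq A$ we are done, so suppose $v\in X^*\setminus A$ and set $Z:=X^*\setminus\{v\}$. I will replace $v$ by a set $S\subseteq A$ of target nodes without increasing the cost, contradicting the choice of $X^*$.

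First I would localize the influence of $v$. As the underlying graph is a tree, there is at most one directed path between any two nodes, so $p(u\mid Y)$ is the probability that the unique directed path from some node of $Y$ to $u$ is entirely live in the random live-arc subgraph. From this one deduces that $p(u\mid X^*)$ and $p(u\mid Z)$ differ only for $u$ in the set $R_v$ of nodes reachable from $v$ along a directed path avoiding $Z$, and that $p(u\mid X^*)\ge p(u\mid Z)$ there. Hence dropping $v$ only increases target cost contributions and only decreases non-target ones; since $X^*$ is cost-minimal with the fewest non-target nodes, $C_A(G,Z)>C_A(G,X^*)$, i.e.\ adding $v$ to $Z$ strictly decreases the cost. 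Moreover $G[R_v]$ is an out-tree $T_v$ rooted at $v$, the cost on $V\setminus R_v$ is unchanged by any $S\subseteq R_v$, and --- conditioning on the arcs not inside $T_v$ --- the influence that $Z$ pushes into $T_v$ reduces to an ``external'' random subset of the portal nodes of $T_v$ (those with an in-arc from outside $R_v$, $v$ among them) being activated. It then remains to show: for the induced sub-instance on $T_v$ with this external boost, some $S\subseteq A\cap R_v$ is, in expectation over the boost, at least as good a seed set as $\{v\}$.

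For this I would induct bottom-up on $T_v$, using a linearity observation: if a node $c$ is reached ``from above'' with an independent probability $\theta$, then, conditioning on that event, the expected cost of the subtree $T_c$ is affine in $\theta$, so the gain from raising $\theta$ from $0$ to $w(\mathrm{par}(c)\to c)$ equals $w(\mathrm{par}(c)\to c)$ times the gain from seeding $c$ outright. Seeding the non-target node $v$ changes $v$'s own cost term by a fixed amount and changes each child subtree's cost by the corresponding arc weight times what seeding that child would do; since the total gain of seeding $v$ is positive, some child subtrees must strictly gain from having their roots seeded, and iterating this reasoning through non-target nodes yields an antichain $S$ of target nodes --- the topmost target nodes of the profitable branches --- whose total seeding gain dominates that of $v$. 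Replacing $v$ by $S$ then produces a minimum-cost solution with one fewer non-target node, the desired contradiction.

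The crux is the bookkeeping in this last step: the seeding gain of a node is not simply additive over an antichain once its subtree already contains other seeds or receives external influence, so the recursion must be set up carefully; and the external boost from $Z$ must be handled so that one fixed target set $S$ works in expectation, not merely for a single realization of the boost. I expect this to require conditioning on the external randomness and on the activation status of the portal nodes, and then applying the affine-in-$\theta$ estimate subtree by subtree.
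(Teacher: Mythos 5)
Your plan is essentially the paper's proof: an exchange argument that localizes the non-target effector $v$ to its out-tree, factors the cost change through the topmost target nodes via the affine (conditional-probability) identity $p(u|X)-p(u|X')=\big(p(v|X)-p(v|X')\big)\big(p(u|v,X')-p(u|\overline{v},X')\big)$, and replaces $v$ by the antichain of topmost target nodes with positive marginal gain. The bookkeeping you flag as the crux resolves exactly as you anticipate: the closures of the topmost target nodes are pairwise disjoint, so the cost decomposes additively over them (plus a non-target remainder of the right sign), and conditioning each subtree on its root's activation status with respect to the remaining seeds $X'$ makes the per-subtree gain affine and independent of whether $v$ was seeded.
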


\begin{proof}
  Before proving the actual theorem, let us have a brief look on the
  probabilistics of the information-propagation process in directed
  trees.
  Clearly, in any influence graph, a node~$v$ can activate another
  node~$u$ only if there is a directed path from~$v$ to~$u$.
  Note that in a directed tree this path is unique if it exists.
  Moreover, the probability~$p(u|X)$ only depends on
  those nodes~$v\in X$ that are connected to~$u$ by a directed path that
  contains no other node from~$X$.
  To see that this is true, consider a node~$v\in X$ such that all directed paths from~$v$ to~$u$ contain another node from~$X$. Then, on each of these paths the corresponding node $x\in X$ has only one chance to activate~$u$ via propagation along the path. Since~$v$ cannot ``re-activate''~$x$ ($x$ is already active from the beginning), the activation probability of~$u$ does not depend on~$v$. 
  For a node~$v\in V$, let $\cl(v)\subseteq V$ denote the \emph{closure} of~$v$,
  that is, the set of all nodes~$u\in V$ for which there exists a directed
  path from~$v$ to~$u$ (including~$v$ itself, that is, $v\in\cl(v)$).

  Let~$(G=(V,E,w),A,b,c)$ with~$b=\infty$ be an input instance of
  \probEffectors, where~$G$ is an arbitrary directed tree.
  Let~$X\subseteq V$ be an optimal solution with~$X \not\subseteq A$,
  that is, there exists a node~$x\in X\setminus A$.
  We show that there is an optimal solution~$X^*$ containing
  fewer non-target nodes than~$X$.
  More formally, we show that there exists a solution~$X^*$
  with~$x\not\in X^*$ and~$X^*\setminus A\subsetneq X\setminus A$ such that
  $C_A(G,X^*) \leq C_A(G,X)$. Recursively applying this argument
  then proves the theorem.

  First, note that if~$C_A(G,X')\leq C_A(G,X)$ holds for~$X':=X\setminus\{x\}$,
  then we are done. Thus, we can assume $C_A(G,X') >
  C_A(G,X)$, or, equivalently:
  \begin{align}
    &C_A(G,X') - C_A(G,X) > 0\nonumber \\
    \Leftrightarrow &\sum_{v \in A}(1-p(v|X'))+\smashoperator{\sum_{v \in V\setminus
        A}}p(v|X') - \Big(\sum_{v \in
      A}(1-p(v|X))+\smashoperator{\sum_{v \in V\setminus
        A}}p(v|X)\Big) > 0\nonumber \\
    \Leftrightarrow &\smashoperator{\sum_{v \in A}}\big(p(v|X)-p(v|X')\big) - \smashoperator{\sum_{v \in V\setminus A}}\big(p(v|X) - p(v|X')\big) > 0.\label{eq:C'-C}
  \end{align}
  Now, consider a node~$v$ that is not in the closure
  of~$x$. Clearly, it holds that~$p(v|X)=p(v|X')$ since there is no directed path
  from~$x$ to~$v$, and thus~$x$ cannot change the 
  probability of~$v$ becoming active during the information-propagation process.
  Therefore, if we let~$A_x := \cl(x)\cap A$ and~$\overline{A}_x :=\cl(x)\setminus A_x$,
  then Inequality~\eqref{eq:C'-C} can be rewritten as
  \begin{align}
    \smashoperator{\sum_{v \in A_x}}\big(p(v|X)-p(v|X')\big)-\smashoperator{\sum_{v \in
    \overline{A}_x}}\big(p(v|X) - p(v|X')\big) > 0.\label{eq:C'-C2}
  \end{align}
  For a directed tree~$G$, the subgraph~$T_x:=G[\cl(x)]$ induced by the
  closure of~$x$ is a rooted directed tree with root~$x$,
  where all the arcs are directed from~$x$ to the leaves (that is, an
  \emph{out-tree}).
  Moreover, for a node~$v\in A_x$, there is exactly one directed path
  from~$x$ to~$v$ in~$T_x$. Let~$A_x'\subseteq A_x$ be the subset of target
  nodes~$v$ in the closure of~$x$ such that the directed path from~$x$ to~$v$
  contains no other target node from~$A_x$.
  Then, we can write the closure of~$x$ as the disjoint union
  $\cl(x) = \bigcup_{v\in A_x'}\cl(v) \cup Z$,
  where~$Z:=\cl(x)\setminus(\bigcup_{v\in A_x'}\cl(v))$.
  Note that~$Z\subseteq\overline{A}_x$.
  Therefore, we can write Inequality~\eqref{eq:C'-C2} as
  \begin{align*}
    \smashoperator{\sum_{v\in A_x'}}\Big(\smashoperator{\sum_{u \in
        A_v}}\big(p(u|X)-p(u|X')\big)-
    &\smashoperator{\sum_{u \in
    \overline{A}_v}}\big(p(u|X) - p(u|X')\big)\Big) -\\
    &\smashoperator{\sum_{v\in Z}}\big(p(v|X)-p(v|X')\big) > 0.
  \end{align*}
  Note that~$p(v|X)\ge p(v|X')$ holds for all~$v\in V$
  since~$X'\subseteq X$, which yields
  \[\sum_{v\in Z}\big(p(v|X)-p(v|X')\big) \ge 0.\]
  Therefore, the following holds
  \begin{align}
    \smashoperator{\sum_{v\in A_x'}}\Big(\smashoperator{\sum_{u \in A_v}}\big(p(u|X)-p(u|X')\big)-\smashoperator{\sum_{u \in \overline{A}_v}}\big(p(u|X) - p(u|X')\big)\Big) > 0.\label{eq:C'-C3}
  \end{align}
  Now, let~$p(\overline{v}|X)$ denote the probability that a
  node~$v$ is not activated given that the nodes in~$X$ are active
  and let~$p(u|\overline{v},X)$ be the probability
  of~$u$ being activated given that~$v$ is inactive
  and the nodes in~$X$ are active.

  Note that, for $v\in \cl(x)$ and~$u\in\cl(v)$,
  the probability of $u$ being active conditioned on~$v$ does
  not depend on~$x$ since~$v$ lies on the directed path from~$x$
  to~$u$, that is, $p(u|v,X) = p(u|v,X')$ and~$p(u|\overline{v},X) =
  p(u|\overline{v},X')$.
  Hence, we have
  \begin{align*}
    p(u|X)&=p(u|v,X)p(v|X)+p(u|\overline{v},X)p(\overline{v}|X)\\
          &=p(u|v,X')p(v|X)+p(u|\overline{v},X')(1-p(v|X))
  \end{align*}
  and
  \[
    p(u|X')=p(u|v,X')p(v|X')+p(u|\overline{v},X')(1-p(v|X')).
  \]
  This yields
  \begin{align*}
    p(u|X)-p(u|X')&=p(u|v,X')\big(p(v|X)-p(v|X')\big) +
                  p(u|\overline{v},X')\big(p(v|X')-p(v|X)\big)\\
                  &=\big(p(v|X)-p(v|X')\big)\big(p(u|v,X')-p(u|\overline{v},X')\big).
  \end{align*}
  Thus, for each~$v\in A_x'$, we have 
  \begin{align*}
    \smashoperator{\sum_{u \in A_v}}\big(p(u|X)-p(u|X')\big)-&\smashoperator{\sum_{u \in
    \overline{A}_v}}\big(p(u|X) - p(u|X')\big)=\\
    \big(p(v|X)-p(v|X')\big)\Big(&\smashoperator{\sum_{u \in
        A_v}}\big(p(u|v,X')-p(u|\overline{v},X')\big) -\\
    &\smashoperator{\sum_{u \in
        \overline{A}_v}}\big(p(u|v,X')-p(u|\overline{v},X')\big)\Big).\stepcounter{equation}\tag{\theequation}\label{eq:inner-prod}
  \end{align*}
  In the following, let
  \[\delta_v(X'):=\smashoperator{\sum_{u \in A_v}}
     \big(p(u|v,X')-p(u|\overline{v},X')\big) -
     \smashoperator{\sum_{u \in \overline{A}_v}}\big(p(u|v,X')-p(u|\overline{v},X')\big).\]
  
  Consider now Inequality~\eqref{eq:C'-C3} again.
  Since the outer summation in Inequality~\eqref{eq:C'-C3} over all nodes $v \in A_x'$ is positive,
  there must be some nodes $v \in A_x'$ for which the summand (that
  is, the right-hand side product of Equation~\eqref{eq:inner-prod})
  is positive.
  Note that~$p(v|X)-p(v|X')\ge 0$ since~$X'\subseteq X$ for all~$v\in
  A_x'$.
  Hence, the set~$A^*:=\{v\in A_x'\mid \delta_v(X') > 0\}$ is non-empty
  since these are the nodes for which the above product is positive.
  Furthermore, we define the new set of effectors~$X^*:=X\setminus\{x\}\cup A^* = X'\cup A^*$,
  which does not include the non-target node~$x$.

  Now, consider the difference~$C_A(G,X)-C_A(G,X^*)$.
  Since~$\{x\}\cup A^*\subseteq \cl(x)$, it follows~$p(v|X)=p(v|X^*)$
  for all~$v\not\in\cl(x)$.
  Thus, analogously to the above steps, we can write
  $C_A(G,X)-C_A(G,X^*)$ as
  \begin{align*}
    \smashoperator{\sum_{v\in A_x'}}\Big(\smashoperator{\sum_{u \in
        A_v}}\big(p(u|X^*)-p(u|X)\big)-
    &\smashoperator{\sum_{u \in
    \overline{A}_v}}\big(p(u|X^*) - p(u|X)\big)\Big) -\\
    &\smashoperator{\sum_{v\in Z}}\big(p(v|X^*)-p(v|X)\big).
  \end{align*}
  Note that, for each~$v\in Z$, it holds for
  all~$u\in A^*$ that~$v\not\in \cl(u)$.
  Hence, $p(v|X^*) = p(v|X')$, which implies
  \[\smashoperator{\sum_{v\in Z}}\big(p(v|X^*)-p(v|X)\big)\le 0.\]
  Thus, we obtain the following inequality
  \begin{align}
    C_A(G,X)-&C_A(G,X^*) \ge\nonumber\\
    &\smashoperator{\sum_{v\in A_x'}}\Big(\smashoperator{\sum_{u \in
        A_v}}\big(p(u|X^*)-p(u|X)\big)-
    \smashoperator{\sum_{u \in
    \overline{A}_v}}\big(p(u|X^*) - p(u|X)\big)\Big).\label{eq:C-C*1}
  \end{align}
  As in Equation~\eqref{eq:inner-prod}, we can rewrite the right-hand side of Inequality~\eqref{eq:C-C*1} to
  \begin{align}
    \smashoperator{\sum_{v\in A_x'}}\big(p(v|X^*)-p(v|X)\big)\Big(&\smashoperator{\sum_{u \in
        A_v}}\big(p(u|v,X)-p(u|\overline{v},X)\big) -\nonumber\\
      &\smashoperator{\sum_{u \in \overline{A}_v}}\big(p(u|v,X)-p(u|\overline{v},X)\big)\Big).\label{eq:C-C*2}
  \end{align}
  Clearly, for~$v\in A_x'$, the probability of~$u\in\cl(v)$ being
  active conditioned on~$v$ does not depend on~$x$,
  that is, it holds
    $p(u|v,X) = p(u|v,X')$ and
    $p(u|\overline{v},X) = p(u|\overline{v},X')$.
  By substituting these probabilities into~\eqref{eq:C-C*2} we
  arrive at

  \[
    C_A(G,X)-C_A(G,X^*) \ge \smashoperator{\sum_{v\in A_x'}}\big(p(v|X^*)-p(v|X)\big)\delta_v(X').
  \]

  Now, for each node~$v\in A^*$, it holds~$\delta_v(X')>0$ and
  $p(v|X^*)-p(v|X) = 1 - p(v|X) \ge 0$, and thus
  $\big(p(v|X^*)-p(v|X)\big)\delta_v(X')\geq 0.$
  For each~$v\in A_x'\setminus A^*$, it holds~$\delta_v(X')\le 0$ and
  $p(v|X^*)=p(v|X')\le p(v|X)$,
  and thus $\big(p(v|X^*)-p(v|X)\big)\delta_v(X')\geq 0.$
  
  Hence, $C_A(G,X)-C_A(G,X^*) \geq 0$ and, clearly, $X^*\setminus A
  \subsetneq X\setminus A$, and we are done.   
\end{proof}

The last theorem shows that our model for the \probEffectors problem and that of~\citet{LTGMH10} sometimes coincide.
In general,
however,
it is not completely clear how the computational complexity of our model for the \probEffectors problem
differs from that of~\citet{LTGMH10}.
We do \emph{mention} that our algorithmic results
(\autoref{lem:det_c0_p},
\autoref{prop:generalZEROrXP},
\autoref{thm:inftyFPTr})
easily transfer to the model of~\citet{LTGMH10},
as well as \autoref{thm:infmax}.

\section{Computing the Cost Function}\label{section:computingCost}
We consider the problem of computing the cost for a given set of effectors.
\probSharpDef
  {\probCost}
  {An influence graph $G = (V, E, w)$, a set of target nodes
    $A\subseteq V$, and a set of effectors $X \subseteq V$.}
  {The cost $C_A(G, X)$.}

\probCost is polynomial-time solvable on directed trees~\cite{LTGMH10}.
By contrast, 
\probCost is unlikely to be polynomial-time solvable even on DAGs.
This follows from a result by~\citet[Theorem 1]{wang2012scalable}.
They show that computing the expected number of activated nodes for a
single given effector is $\#\p$-hard on DAGs.
Note that for the case~$A=\emptyset$ (that is, $a=0$), the cost equals the expected
number of activated nodes at the end of the propagation process.
Hence, we obtain the following corollary of~\citet{wang2012scalable}.

\begin{corollary}\label{thm:costNP}
  \probCost on directed acyclic graphs is $\#\p$-hard even for~$a=0$ and~$|X|=1$.
\end{corollary}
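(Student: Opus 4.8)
The plan is to obtain \autoref{thm:costNP} as a direct corollary of the $\#\mathsf{P}$-hardness result of \citet[Theorem~1]{wang2012scalable}, which states that computing the expected number of nodes activated by a single seed node under the Independent Cascade model on a DAG is $\#\mathsf{P}$-hard. First I would make precise the observation that connects the two statements: for an influence graph $G=(V,E,w)$, a set of effectors $X\subseteq V$, and the empty target set $A=\emptyset$, the cost function specializes to
\[
  C_\emptyset(G,X) = \sum_{v\in V}p(v|X) = \mathbb{E}[\#\{v\in V: v\text{ active at termination}\}],
\]
since the first sum in the definition of $C_A$ vanishes and $p(v|X)$ is exactly the probability that $v$ is active when the process is seeded with $X$. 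Hence computing $C_\emptyset(G,X)$ is the same as computing the expected number of activated nodes.

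The second step is to match the input formats. The hardness instance of \citet{wang2012scalable} is a DAG with edge-activation probabilities and a single seed node $s$; this is exactly an influence graph $G=(V,E,w)$ together with the effector set $X=\{s\}$, so $|X|=1$. Note that the reduction of \citet{wang2012scalable} may a priori produce arcs of probability~$1$; this is harmless, since such arcs are permitted in an influence graph (the weight function maps into $(0,1]\cap\mathbb{Q}$) and do not affect the counting complexity. Thus an algorithm computing $C_\emptyset(G,\{s\})$ in polynomial time would compute the expected number of activated nodes from a single seed on a DAG in polynomial time, which is impossible unless every function in $\#\mathsf{P}$ is polynomial-time computable. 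This establishes $\#\mathsf{P}$-hardness of \probCost restricted to the subcase $a=0$, $|X|=1$, and DAGs.

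Since the argument is essentially a definitional unfolding plus a citation, there is no real obstacle; the only point requiring a sentence of care is verifying that the cost function genuinely collapses to the expected activation count when $A=\emptyset$ and that the notion of ``$\#\mathsf{P}$-hard function'' used in the paper's preliminaries (polynomial-time computability would collapse $\#\mathsf{P}$) is the one under which \citet{wang2012scalable} phrase their result, so the reduction composes cleanly. I would therefore keep the proof to roughly two sentences: state the identity $C_\emptyset(G,X)=\sum_{v\in V}p(v|X)$, observe it equals the expected number of activated nodes, and invoke \citet[Theorem~1]{wang2012scalable} with $X$ a singleton and $G$ the DAG from their construction.
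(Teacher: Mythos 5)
Your proposal is correct and takes exactly the same route as the paper: observe that for $A=\emptyset$ the cost collapses to $\sum_{v\in V}p(v|X)$, i.e., the expected number of activated nodes, and then invoke the $\#\p$-hardness of computing this quantity for a single seed on DAGs due to \citet[Theorem~1]{wang2012scalable}. The extra care you take about input formats and arcs of probability~$1$ is fine but not needed beyond what the paper already states.
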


Note that \autoref{thm:costNP} implies that \probCost on DAGs is not fixed-parameter tractable
with respect to the combined parameter~$(a,|X|)$.

On the positive side, \probCost is fixed-parameter tractable with respect to the
number~$r$ of probabilistic arcs.
The general idea is to recursively simulate the propagation process,
branching over the probabilistic arcs,
and to compute a weighted average of the final activation state of the graph.

\begin{theorem}\label{thm:costFPTr}
  Given an instance~$(G=(V,E),A,X)$ of \probCost, the probability~$p(v|X)$ for a given node~$v\in V$ can be computed in~$O(2^r \cdot n (n + m))$ time, where~$r$ is the number of probabilistic arcs.

  Accordingly, \probCost can be solved in $O(2^r \cdot n^2 (n + m))$ time.
\end{theorem}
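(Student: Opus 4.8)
The plan is to describe a recursive procedure that, on input $(G,A,X)$ and a target node $v$, computes $p(v|X)$ by enumerating the $2^r$ possible outcomes of the probabilistic arcs, but doing so lazily so that the recursion tree has only $2^r$ leaves and each internal node costs only a single deterministic propagation. First I would observe that once we fix the outcome (success/failure) of every probabilistic arc, the independent cascade process becomes \emph{deterministic}: the set of nodes eventually activated from $X$ is exactly the set of nodes reachable from $X$ using all arcs with $w=1$ together with the probabilistic arcs declared ``successful''. This reachable set can be computed in $O(n+m)$ time by a BFS/DFS from $X$. Summing, over all $2^r$ outcomes, the product of the corresponding arc probabilities restricted to those outcomes where $v$ is reachable, gives $p(v|X)$ exactly; this is just the law of total probability, conditioning on the joint outcome of the $r$ probabilistic arcs, and it is valid because distinct probabilistic arcs fire independently in the IC model and each fires at most once.

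The efficiency improvement over the naive $O(2^r(n+m))$ bound comes from a branching argument: rather than fixing all $r$ outcomes up front, I would run a deterministic propagation using only the arcs with $w=1$; whenever the frontier of newly activated nodes is about to traverse a probabilistic arc $u\to v'$ whose tail $u$ has just become active, the recursion branches on whether that arc succeeds (weight $w(u\to v')$, arc added) or fails (weight $1-w(u\to v')$, arc deleted), and recurses on the reduced instance. A probabilistic arc is only branched on when its tail is reached, and once branched it is removed from further consideration, so along any root-to-leaf path each of the $r$ probabilistic arcs is branched at most once; hence the recursion tree has at most $2^r$ leaves and at most $2^{r}-1$ internal nodes, and at each node we do $O(n+m)$ work continuing the propagation. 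Actually, to get the claimed $O(2^r\cdot n(n+m))$ I would be slightly less clever and simply observe that the recursion has depth at most $r$ and at most $2^r$ leaves, and charge $O(n+m)$ per node of the tree, with the extra factor of $n$ absorbing bookkeeping such as re-initializing activation states; the precise accounting is routine. At each leaf we know the final activation set and the accumulated weight, and we add the weight to an accumulator for $p(v|X)$ iff $v$ is active there.

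For the second statement, I would just compute $p(v|X)$ for every one of the $n$ nodes $v$ — or, better, run the procedure once and record for each leaf the \emph{entire} activation set, accumulating the weight into $p(v|X)$ for every active $v$ simultaneously — and then evaluate $C_A(G,X)=\sum_{v\in A}(1-p(v|X))+\sum_{v\notin A}p(v|X)$ in $O(n)$ additional time. This gives $O(2^r\cdot n(n+m))$ overall if done in the combined way, and the stated $O(2^r\cdot n^2(n+m))$ bound follows trivially from $n$ independent invocations; I would present the weaker bound as stated in the theorem and remark the combined version in passing. The main thing to be careful about — the one place a sloppy argument would go wrong — is the independence claim that justifies multiplying the per-arc probabilities: one must check that the IC model really does give each arc an independent single coin flip (which the preliminaries state), so that conditioning on the full outcome vector of the $r$ probabilistic arcs yields a product distribution and the deterministic reachability characterization is exact. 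Everything else is bookkeeping on the branching recursion.
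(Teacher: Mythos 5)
Your proposal is correct and follows essentially the same approach as the paper: a lazy search tree that branches only on probabilistic arcs whose tails have actually been reached by the propagation, so the tree has at most $2^r$ leaves with $O(n+m)$ work per node, and the full cost is then obtained by running the computation for all $n$ nodes. The differences are cosmetic --- the paper branches on all subsets of the current frontier's probabilistic out-arcs at once and justifies correctness by applying the law of total probability step by step to the process itself, whereas you branch one arc at a time and rest correctness on the standard live-edge (deferred coin-flip) equivalence for the independent cascade model; both are valid.
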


\begin{proof}
  The overall idea of the proof is as follows.
  For each subset of the probabilistic arcs,
  we compute the cost,
  conditioned on the event that the propagation process was successful on these arcs,
  but not successful on the other probabilistic arcs.
  For each such subset we also compute the probability that this event happens.
  Then,
  by applying the law of total probability,
  it follows that the overall cost equals
  the weighted average of these conditioned costs,
  weighted by the probability of these events.
  
  We present the algorithm in a recursive way,
  mainly for the sake of having a formal proof for its correctness.
  To this end,
  let~$(G=(V,E),A,X)$ be an input instance of \probCost.
  Note that in order to compute the cost~$C_A(G,X)$, we
  compute the probability~$p(v|X)$ for each node~$v\in V$,
  because given all these probabilities
  it is straightforward to compute the cost in polynomial time.
  Hence, we prove the theorem by showing that computing~$p(v|X)$
  is fixed-parameter tractable with respect to~$r$
  using a search-tree algorithm that computes~$p(v|X)$
  for a given node~$v$ by recursively ``simulating'' all possible scenarios which
  could appear during the propagation process.
  
  To this end,
  we define an auxiliary function
  $\tilde{p}(v,X,F)$ denoting the probability that~$v$ is
  activated during the propagation process given that exactly the
  nodes in~$X$ are active but only the nodes in~$F\subseteq X$ are
  allowed to activate further nodes in the next step,
  whereas the nodes in~$X\setminus F$ can never activate any other node
  (indeed,~$p(v|X)=\tilde{p}(v,X,X)$).

  We now show how to compute~$\tilde{p}(v,X,F)$.
  First, if~$v\in X$, then $\tilde{p}(v,X,F)=1$, as it is already activated.
  Otherwise, if~$v\not\in X$ and~$F$ is closed (that is,~$F$ has no
  outgoing arcs to~$V\setminus X$),
  then there is no propagation at all and thus~$\tilde{p}(v,X,F)=0$.
  Otherwise, if~$X$ is not closed, then let~$N\subseteq V\setminus X$ denote
  the set of nodes in~$V\setminus X$ that have an
  incoming arc from some node in~$F$.
  Further, let~$N_d\subseteq N$ be the set of nodes that
  have at least one deterministic incoming arc from~$F$, and
  let~$N_p:=N\setminus N_d$.
  Also, let~$E_p\subseteq E$ be the set of probabilistic arcs from~$F$ to~$N$.
  Clearly, all nodes in~$N_d$ will be active in the next step of the
  propagation process, while the nodes in~$N_p$ will be active in the
  next step only with some positive probability.
  We can use the law of total probability on the subsets of $N_p$,
  and write
  \[\tilde{p}(v,X,F)=\smashoperator{\sum_{R\subseteq N_p}}\tilde{p}(v,X_R,F_R)q(X_R|X),\]
  where~$X_R:=X\cup F_R$ denotes the set of active nodes in the next time step,
  $F_R:=N_d\cup R$ denotes the set of newly active nodes in the next time step,
  and~$q(X_R|X)$ denotes the probability that \emph{exactly} the nodes
  in~$X_R$ are active in the next step given that \emph{exactly} the nodes
  in~$X$ are active.
  Note that, for each subset~$R\subseteq N_p$,
  \[q(X_R|X) = \prod_{u\in R}\Big(1-\overline{p}_u\Big)\smashoperator{\prod_{u\in N_p \setminus R}}\overline{p}_u,
  \text{  where  } \overline{p}_u:=\smashoperator{\prod_{v\to u\in E_p}}(1-w(v\to u)),\]
  is polynomial-time computable.
  As a result,
  we end up with the following recursive formula:
  \[\tilde{p}(v,X,F) :=
  \begin{cases}
    1,\; \text{ if }v\in X\\
    0,\; \text{ if }v\not\in X \text{ and } X \text{ closed}\\
    \sum_{R\subseteq N_p}q(X_R|X)\cdot \tilde{p}(v,X_R,F_R),\; \text{ else}.
  \end{cases}\]
  
  \begin{algorithm}[t]
    \normalsize
    \SetAlgoNoLine
    \caption{Pseudocode for $\tilde{p}(v, X, X)$.}
    \label{alg:costFPTr}
    \If{$v \in X$}{
      \Return $1$ \\
    }
    \If{$v \notin X$ and $X$ is closed}{
      \Return $0$ \\
    }
    \ForEach{$R\subseteq N_p$}{
      compute $q(X_R|X)$ \\
      compute $\tilde{p}(v,X_R,F_R)$ recursively \\
    }
    \Return $\sum_{R\subseteq N_p}q(X_R|X)\cdot \tilde{p}(v,X_R,F_R)$
  \end{algorithm}
  
  \autoref{alg:costFPTr} presents the pseudocode for computing $\tilde{p}$.
  For the running time, consider the recursion tree corresponding to
  the computation of~$\tilde{p}(v,X,X)$, where each vertex corresponds
  to a call of~$\tilde{p}$.
  
  For the running time,
  note that the inner computation (that is, without further recursive calls) of each node in the recursion tree
  can be done in time $O(n + m)$.  
  Moreover, for each call, either at least one node is inserted to~$X$,
  or the recursion stops.
  Therefore, the height of the recursion tree is upper-bounded by the number~$n$ of nodes.
  Lastly, each leaf in the recursion tree corresponds to a distinct subset of the probabilistic arcs,
  specifically, to those probabilistic arcs along which the propagation process carried on.
  Since there are~$2^r$ different subsets of probabilistic arcs,
  it follows that the number of leaves of the recursion tree is upper-bounded by~$2^r$.
  Thus, the overall size of the recursion tree is upper-bounded by $2^r \cdot n$,
  and hence,
  the running time is~$O(2^r \cdot n (n + m))$.
\end{proof}

\section{Finding Effectors}\label{section:findingEffectors}
We treat the general variant of \probEffectors in~\autoref{section:noAssumptions}, 
the special case of unlimited budget in~\autoref{section:infiniteBudget},
and the special case of influence maximization 
in~\autoref{section:everythingActive}.

\subsection{General Model}\label{section:noAssumptions}
We study how the parameters number~$a$ of target nodes, budget~$b$, and cost value~$c$ influence the computational complexity of \probEffectors.
We first observe that if at least one of them equals zero, then
\probEffectors is polynomial-time solvable.
This holds trivially for parameters $a$ and $b$; simply choose the
empty set as a solution.
This is optimal for $a = 0$, and the only feasible solution for $b = 0$.
For parameter $c$, the following holds, using a simple decomposition into 
strongly connected components.

\begin{lemma}\label{lem:det_c0_p}
  For $c = 0$, \probEffectors can be solved in linear time.
\end{lemma}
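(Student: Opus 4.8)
The plan is to characterize exactly when a zero-cost solution exists. Since cost is a sum of nonnegative terms $C_A(v,X)$, we need $C_A(G,X)=0$, which forces $p(v|X)=1$ for every $v\in A$ and $p(v|X)=0$ for every $v\notin A$. The second condition means that no node outside $A$ may ever become active during propagation; in particular no effector may lie outside $A$, and no arc (even probabilistic, but in fact any arc at all) may lead from a reachable active node into $V\setminus A$. So the first step is to observe that $X\subseteq A$ is necessary, and moreover that $A$ must be \emph{closed} under out-arcs in the following sense: if $u\in A$ is reachable (via some activation) and $u\to v\in E$ with $v\notin A$, then with positive probability $v$ becomes active, contradicting $p(v|X)=0$. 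Hence the set of nodes ever activated is contained in $A$; combined with needing $p(v|X)=1$ for all $v\in A$, we conclude every node of $A$ must be activated with certainty.

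Next I would reduce to a reachability-with-probability-one question on $G[A]$. Activation propagates from $u$ to $v$ with probability $1$ exactly along deterministic arcs ($w(u\to v)=1$); along a probabilistic arc the child is activated with probability $<1$, so it cannot be relied upon. Therefore $p(v|X)=1$ for all $v\in A$ holds if and only if every node of $A$ is reachable from $X$ using only deterministic arcs inside $G[A]$ (and, as noted, $G[A]$ has no out-arcs to $V\setminus A$ emanating from any node reachable this way — but since we must activate \emph{all} of $A$, "reachable this way" is just all of $A$, so the requirement is simply: $A$ has no outgoing arcs at all, i.e. $A$ is closed in $G$). So a zero-cost solution of size $\le b$ exists iff $A$ is closed in $G$ and the deterministic subgraph $G_{\det}:=(A,\{u\to v\in E: w(u\to v)=1\})$ can be "covered" by at most $b$ source nodes whose deterministic-reachability sets union to $A$.

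Then I would handle this covering question via strongly connected components, as the statement hints. Compute the condensation of $G_{\det}$ restricted to $A$; this is a DAG. A set $X$ of $b$ nodes reaches all of $A$ via deterministic arcs iff $X$ hits (contains a node from) every source SCC of this condensation, since from a source SCC one can reach its whole SCC and everything below, and source SCCs are mutually unreachable. Hence the minimum number of effectors needed is exactly the number of source SCCs of the condensation of $G_{\det}[A]$; the instance is a "yes"-instance iff $A$ is closed in $G$ and this number is at most $b$. All of this — checking closedness of $A$, extracting deterministic arcs, computing SCCs and the condensation, counting sources — is doable in linear time via Tarjan's algorithm, giving the claimed bound.

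The main obstacle, and the only genuinely subtle point, is arguing carefully that probabilistic arcs are useless for a zero-cost solution in \emph{both} directions: they cannot help activate nodes of $A$ (since their success probability is strictly below $1$, a single one on every path to some $v\in A$ already makes $p(v|X)<1$ — one must verify the paths-with-all-deterministic-arcs characterization of $p(v|X)=1$, using that the propagation process only fails to activate a deterministically-reachable node with probability $0$), and that the mere \emph{existence} of any out-arc from $A$ to $V\setminus A$ is already fatal regardless of weight, because that target of the arc gets activated with positive probability once we have certainly activated all of $A$. Everything else is routine graph-algorithmics.
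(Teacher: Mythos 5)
Your proof is correct and follows essentially the same route as the paper: check that no arc (equivalently, no directed path) leaves $A$, restrict to the deterministic arcs, and compare the number of source components of the resulting condensation against $b$. If anything, your version is slightly more careful, since you take the condensation of the deterministic subgraph induced by $A$ rather than of the whole graph, which correctly forces an effector in a target component whose only incoming deterministic arcs come from (never-activatable) non-target components.
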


\begin{proof}
  If there is a directed path from a target node to a non-target node,
  then we have a ``no''-instance.
  Now every target node must be activated with probability~1,
  which is only possible along deterministic arcs.
  Let $G'$ be the condensation (that is, the DAG of strongly connected
  components) of the influence graph~$G$ after
  removing all probabilistic arcs.
  Then, we consider only the strongly connected components which
  contain at least one target node
  (note that all nodes in this component must be targets).
  Finally, if there are more than~$b$ of these target components that
  are sources in~$G'$, then we have a ``no''-instance.
  Otherwise, we arbitrarily pick a node from each component corresponding to a source,
  and return a positive answer.
  Each step requires linear time. 
\end{proof}

\noindent 
Based on \autoref{lem:det_c0_p},
by basically checking all possibilities in a brute-force manner,
we obtain simple polynomial-time algorithms for
\probEffectors in the cases of a constant number~$a$ of target
nodes, budget~$b$, or cost~$c$.

\begin{proposition}\label{prop:generalZEROrXP}
  For $r = 0$, \probEffectors is in $\xp$ with respect to each of
  the parameters~$a$, $b$, and~$c$.
\end{proposition}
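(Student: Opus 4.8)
The plan is to exploit that for $r=0$ the diffusion process is purely deterministic: starting from an effector set~$X$, the set of nodes that end up active is exactly $\cl(X):=\bigcup_{x\in X}\cl(x)$, so $p(v\mid X)\in\{0,1\}$ for every~$v$, and the cost collapses to $C_A(G,X)=|A\,\triangle\,\cl(X)|$, a nonnegative integer that can be evaluated in $O(n+m)$ time for any given~$X$ via a single graph search. Hence the question becomes: is there a set~$X$ with $|X|\le b$ and $|A\,\triangle\,\cl(X)|\le\lfloor c\rfloor$?

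For the parameters~$b$ and~$a$ a brute-force search over all ``small'' effector sets suffices. For~$b$, I would enumerate all $n^{O(b)}$ sets~$X$ of size at most~$b$, compute $\cl(X)$ and its cost, and answer accordingly. For~$a$, the extra ingredient is that every ``yes''-instance already has a solution of size at most~$a$: given any feasible~$X$, pick for each target~$v\in A\cap\cl(X)$ a single effector $x_v\in X$ with $v\in\cl(x_v)$ and let $X':=\{x_v\mid v\in A\cap\cl(X)\}$. Then $|X'|\le\min(|A|,|X|)\le\min(a,b)$, and since $\cl(X')\subseteq\cl(X)$ while $\cl(X')\cap A=\cl(X)\cap A$, we get $C_A(G,X')\le C_A(G,X)$, so~$X'$ is feasible too. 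It therefore suffices to enumerate the $n^{O(a)}$ sets of size at most~$a$.

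For the parameter~$c$ a different idea is needed, reusing the condensation technique behind \autoref{lem:det_c0_p}. Here I would guess the symmetric difference $D:=A\,\triangle\,\cl(X)$ of a hypothetical solution; since $|D|$ equals the cost, only the $n^{O(c)}$ subsets~$D$ with $|D|\le\lfloor c\rfloor$ are relevant. For each guessed~$D$, set $S:=A\,\triangle\,D$ (the intended value of $\cl(X)$) and test (i)~whether $S$ is closed under out-reachability in~$G$, i.e.\ $v\in S$ and $v\to u\in E$ imply $u\in S$, and (ii)~whether the condensation of $G[S]$ has at most~$b$ source components. The algorithm outputs ``yes'' iff some guess passes both tests; each test runs in linear time, for a total of $n^{O(c)}(n+m)$. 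Correctness hinges on the claim that (i) and (ii) together are \emph{equivalent} to the existence of an~$X$ with $|X|\le b$ and $\cl(X)=S$: if $\cl(X)=S$, then $S$ is automatically out-closed, and~$X$ must contain a node from every source component of $G[S]$, because within~$S$ such a component can be reached only from within itself; conversely, picking one node from each source component of $G[S]$ and using out-closedness of~$S$ yields both $\cl(X)\subseteq S$ and $\cl(X)\supseteq S$, hence $\cl(X)=S$, of cost $|D|\le c$.

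The main obstacle, and the only step needing genuine care, is precisely this last equivalence in the $c$-case: one must insist that ``closed'' means closed under out-reachability in the full graph~$G$ rather than merely inside $G[S]$ (otherwise $\cl(X)$ could leak out of~$S$), and one must verify both inclusions $\cl(X)\subseteq S$ and $\cl(X)\supseteq S$ for the reconstructed set~$X$. The brute-force arguments for~$a$ and~$b$ are routine once the deterministic-closure identity $C_A(G,X)=|A\,\triangle\,\cl(X)|$ is in hand.
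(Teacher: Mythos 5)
Your proof is correct and follows essentially the same route as the paper: brute-force enumeration of size-$\le b$ effector sets for parameter~$b$, a ``shrink the solution to at most $a$ effectors'' argument for parameter~$a$, and, for parameter~$c$, guessing the $\le c$ cost-incurring nodes (your $D=A\,\triangle\,\cl(X)$) and reducing to the zero-cost check of \autoref{lem:det_c0_p}. The only differences are cosmetic --- the paper replaces $X$ by the set of activated targets rather than a subset of $X$, and invokes \autoref{lem:det_c0_p} as a black box where you re-derive its closedness/source-component test inline.
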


\begin{proof}
  Containment in $\xp$ for the parameter $b$ is straightforward:
  For each possible set of effectors, we compute the cost in linear time and
  then return the best set of effectors.

  Note that for the case $r = 0$, we can assume that $b \leq a$.
  To see this, let~$X\subseteq V$ be a solution of size~$|X|> a$ and let~$A'\subseteq A$ be the subset of target nodes that are activated by choosing~$X$.
  Clearly, choosing~$A'$ as effectors is a better solution since it activates the
  same target nodes and only activates a subset of the non-target nodes activated by~$X$.
  Therefore, we also have containment in $\xp$ with respect to~$a$.

  It remains to show the claim for parameter~$c$.
  First, we choose which $c'\le c$ nodes incur a cost.
  Among these nodes, we set the target nodes to be non-targets, and vice versa.
  Then, we run the polynomial-time algorithm of \autoref{lem:det_c0_p} with cost~0.
  We exhaustively try all possible $\sum_{c'=0}^c\binom{n}{c'} \in O(n ^ c)$ choices to find a
  positive answer and return a negative answer otherwise. 
\end{proof}

In the following, we show that, even for $r = 0$ and the influence
graph being a DAG, \probEffectors is
$\wone$-hard with respect to the \emph{combined} parameter $(a, b, c)$,
and even $\wtwo$-hard with respect to the \emph{combined} parameter $(b,c)$.

\begin{theorem}\label{thm:combinedHardness}
  \mbox{}
  \begin{enumerate}
    \item \probEffectors,
      parameterized by the combined parameter $(a, \allowbreak b, c)$,
      is $\wone$-hard,
      even if $r = 0$
      and the influence graph is a DAG.
    \item \probEffectors,
      parameterized by the combined parameter $(b, c)$,
      is $\wtwo$-hard,
      even if $r = 0$
      and the influence graph is a DAG.
  \end{enumerate}
\end{theorem}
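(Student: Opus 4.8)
The plan is to give parameterized reductions from the two canonical hard problems: \probColorClique (which is $\wone$-hard with respect to the number~$k$ of color classes) for part~1, and \probSetCover (which is $\wtwo$-hard with respect to the solution size~$k$) for part~2. In both cases the target is to build an influence graph that is a DAG with all arc weights equal to~$1$ (so $r=0$ and the propagation is purely deterministic), in which a small set of effectors of low cost exists if and only if the source instance is a ``yes''-instance. Since $r=0$, the cost $C_A(G,X)$ is just the number of nodes whose activation status disagrees with~$A$, which makes the bookkeeping entirely combinatorial.

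For part~2, the \probSetCover reduction is the more natural one, so I would do it first. Given a ground set $U=\{u_1,\dots,u_t\}$, a family $\mathcal{S}=\{S_1,\dots,S_q\}$, and a parameter~$k$, create one node $x_j$ for each set $S_j$ (these are the candidate effectors, all non-target), and one target node $e_i$ for each ground element $u_i$; add the arc $x_j \to e_i$ (weight~$1$) whenever $u_i \in S_j$. Set $A$ to be exactly the element-nodes, the budget $b=k$, and the cost $c=0$. Choosing effectors $X\subseteq\{x_1,\dots,x_q\}$ activates precisely the element-nodes covered by the corresponding sets and no other nodes, so $C_A(G,X)=|U\setminus\bigcup_{x_j\in X}S_j|$, which is~$0$ iff the chosen sets cover~$U$. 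The graph is bipartite, hence a DAG; the parameters $b=k$ and $c=0$ are bounded by a function of the source parameter, so this is a valid parameterized reduction and gives $\wtwo$-hardness for~$(b,c)$.

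For part~1, I would reduce from \probColorClique. Let the input graph have vertices partitioned into color classes $V_1,\dots,V_k$; the goal is a clique using one vertex per class. The challenge is that a clique requires checking $\binom{k}{2}$ edge constraints while keeping $a$, $b$, and $c$ all bounded by $f(k)$. The idea is: for each color class~$i$ introduce a gadget letting the effector solution ``select'' exactly one vertex $v\in V_i$ (e.g.\ a non-target chooser node with an arc to every vertex-node of class~$i$, or directly allowing vertex-nodes themselves as effectors), contributing $O(k)$ to the budget; for each pair of classes $(i,j)$ introduce a small number of target ``verification'' nodes that become correctly activated only when the two selected vertices in $V_i$ and $V_j$ are adjacent, otherwise incurring a bounded cost—so that zero total cost forces all pairs to check out. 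With $O(k^2)$ such pair-gadgets, each of constant target-node size and contributing constant cost, we get $a, b, c \in O(k^2)$, all functions of~$k$. Again the construction can be laid out in layers (choosers $\to$ vertex selection $\to$ pair verification) so that the resulting graph is a DAG with all weights~$1$.

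The main obstacle is the design of the pair-verification gadget in part~1: it must (i) have a correct activation state reachable precisely when the selected endpoints are adjacent, (ii) be ``monotone-safe'' in that adversarial extra effectors cannot cheat the cost down, and (iii) keep both its node count and its cost contribution bounded by a constant (independent of the instance size), since otherwise $a$ or $c$ would blow up. I expect this to require careful use of dummy target/non-target nodes so that the unique zero-cost configuration corresponds exactly to a legitimate selection that respects all adjacency constraints; verifying that no cheaper solution exists via unintended effector choices (using the $r=0$ combinatorial cost formula and the DAG structure to bound what any size-$b$ set can activate) is the delicate part of the correctness argument.
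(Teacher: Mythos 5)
Both parts of your proposal have genuine gaps, and the one in Part~2 is fatal as written. Your \probSetCover reduction sets $c=0$ and claims $C_A(G,X)=|U\setminus\bigcup_{x_j\in X}S_j|$, but you have dropped the second summand of the cost function: every activated \emph{non-target} node contributes $p(v|X)=1$ to the cost, and your chosen effectors are themselves non-target set-nodes, so your intended solution already costs $k$, not $0$. More decisively, \autoref{lem:det_c0_p} shows that every instance with $c=0$ is solvable in linear time, so no reduction that outputs $c=0$ instances can establish $\wtwo$-hardness. Even after repairing the target value to $c=k$, the reduction is not yet correct: a solution may ``cheat'' by choosing element nodes directly as effectors (these are targets and cost nothing), and one can only extract a set cover of size roughly $2k$ from an arbitrary low-cost solution. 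The paper blocks exactly this cheating by replication --- it reduces from \probDominatingSet and gives each vertex $k+1$ target copies, so that picking copies directly can never pay off; you would need the analogous gadget ($k+1$ copies of each element node) to make your \probSetCover route go through.

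For Part~1 you choose the same source problem as the paper (\probColorClique), but your architecture --- select one vertex per color class, then verify each pair with a constant-size gadget that activates correctly \emph{only when the two selected vertices are adjacent} --- founders on the step you yourself flag as the ``main obstacle.'' With $r=0$ and all weights $1$, propagation is a monotone reachability process: a node becomes active iff \emph{some} in-neighbor is active. There is no way for a single gadget node to test the conjunction ``both selected vertices are adjacent''; a node with arcs from the two selected vertex-nodes fires when \emph{either} is chosen. The paper sidesteps this entirely by making \emph{edge} nodes the effectors (an edge node inherently certifies one adjacency), replicating each color-pair target $\binom{k}{2}+k+1$ times so that targets are never picked directly, and then letting the cost count the activated non-target vertex nodes: $\binom{k}{2}$ chosen edge nodes activate at most $k$ vertex nodes iff they form a multicolored clique. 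The same omission as in Part~2 also recurs here: your ``zero total cost'' goal is unattainable because your chooser/vertex effectors are non-target nodes and each contributes $1$ to the cost. In short, the counting-via-cost mechanism (effectors pay for the non-targets they activate, and replication forbids shortcuts) is the missing idea in both of your reductions.
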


\begin{proof}
  We begin with the first statement,
  namely,
  that \probEffectors,
  parameterized by the combined parameter $(a, \allowbreak b, c)$,
  is $\wone$-hard,
  even if $r = 0$ and
  $G$ is a DAG.
  We describe a parameterized reduction from the following $\wone$-hard problem~\cite{FHRV09}.
  \probDef
    {\probColorClique}
    {A simple and undirected graph $G = (V, E)$ with $k$ colors on the vertices and $k \in \N$.}
    {Is there a $k$-vertex clique with exactly one occurrence of each
    color in the clique?}

  \begin{figure}[t]
  \centering
  \begin{tikzpicture}[>=stealth]
    \tikzstyle{active}=[circle,draw,fill=black,minimum size=5pt,inner sep=3pt]
    \tikzstyle{inactive}=[circle,draw,minimum size=5pt,inner sep=3pt]
    
    \node[inactive] (v1) at (1,4) {};
    \node[rectangle, draw] (l1) at (1, 4.5) {$u$};
    \draw (v1) -- (l1);
    \node[inactive] (v2) at (1,3) {};
    \node [rectangle, draw] (l2) at (1, 3.5) {$v$};
    \draw (v2) -- (l2);
    \node at (1,2) {$\vdots$};
    \node[inactive] (vn) at (1,1) {};
    
    \node[inactive] (e1) at (3,4.5) {};
    \node [rectangle, draw] (l3) at (3,5.1) {$\{u, v\}$};
    \draw (e1) -- (l3);
    \node[inactive] (e2) at (3,3.5) {};  
    \node[inactive] (e3) at (3,2.5) {};
    \node at (3,1.5) {$\vdots$};
    \node[inactive] (em) at (3,0.5) {};
    
    \node [rectangle, draw] (l4) at (5,2) {$(c(u), c(v))$};
    \node[active] (c11) at (5,4) {};
    \draw (l4) -- (5, 2.6);
    \node[active] (c12) at (6,4) {};
    \node[active] (c13) at (7,4) {};
    \node at (8,4) {$\ldots$};
    \node[active] (c1c) at (9,4) {};
    
    \node[active] (c21) at (5,3) {};
    \node[active] (c22) at (6,3) {};
    \node[active] (c23) at (7,3) {};
    \node at (8,3) {$\ldots$};
    \node[active] (c2c) at (9,3) {};
    
    \node[active] (cc1) at (5,1) {};
    \node[active] (cc2) at (6,1) {};
    \node[active] (cc3) at (7,1) {};
    \node at (8,1) {$\ldots$};
    \node[active] (ccc) at (9,1) {};
    
    \draw[rounded corners=8pt,line width=1.2pt] (4.5,4.4) rectangle (9.5,3.6);
    \draw[rounded corners=8pt,line width=1.2pt] (4.5,3.4) rectangle (9.5,2.6);
    \node at (7,2) {$\vdots$};
    \draw[rounded corners=8pt,line width=1.2pt] (4.5,1.4) rectangle (9.5,0.6);
    
    \draw[thick,->] (e1) -- (v1);
    \draw[thick,->] (e1) -- (v2);
    \draw[thick,->] (e1) -- (4.5,3);
    
    \draw[decorate,decoration={brace,mirror,raise=6pt,amplitude=10pt}, thick] (0.8,4.2)--(0.8, 0.8);
    \node at (-0.3, 2.67) {vertex};
    \node at (-0.3, 2.43) {nodes};
    \draw[decorate,decoration={brace,mirror,raise=6pt,amplitude=10pt}, thick] (2.8,4.7)--(2.8, 0.3);
    \node at (1.7, 2.67) {edge};
    \node at (1.7, 2.43) {nodes};
    \draw[decorate,decoration={brace,raise=6pt,amplitude=10pt}, thick] (9.5,4.2)--(9.5, 0.8);
    \node at (10.8, 2.5) {$\binom{k}{2}$ pairs};
    \draw[decorate,decoration={brace,raise=6pt,amplitude=10pt}, thick] (4.6,4.4)--(9.4, 4.4);
    \node at (7, 5.2) {$\binom{k}{2}+k+1$};
    \end{tikzpicture}
  \caption{Illustration of the influence graph used in the reduction from \probColorClique.
  In this example arcs are shown for one of the edge nodes.
  An arc from an edge node to a set of color pair nodes is used to represent the $\binom{k}{2}+k+1$ arcs to all nodes for this color pair.
  All arcs have an influence weight of 1.}
  \label{fig:MultiColoredCliqueReduction}
\end{figure}
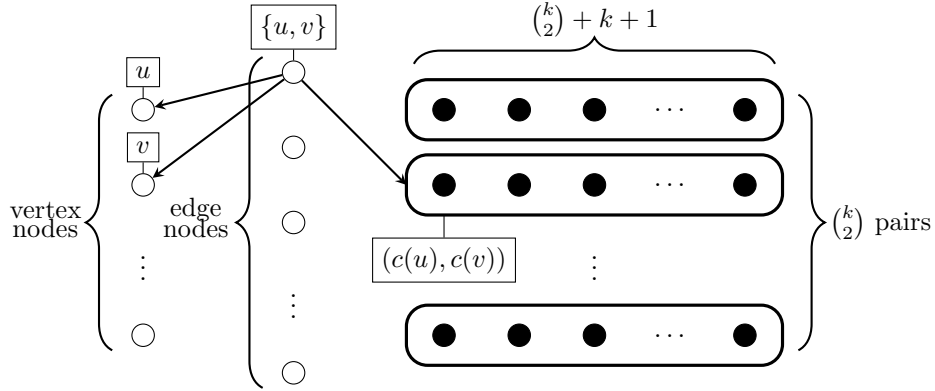

  Consider an instance $(G = (V, E), k)$ of \probColorClique.
  We assume that $k > 1$, otherwise the instance can be solved trivially.
  We construct an instance of \probEffectors with $b = \binom{k}{2}$, $c
  = \binom{k}{2} + k$ and an influence graph (see \autoref{fig:MultiColoredCliqueReduction} for an illustration) defined
  as follows.
  Add $\binom{k}{2} + k + 1$ nodes for each unordered pair of distinct colors.
  Let us call these nodes \emph{color-pair nodes}. These color-pair nodes are the target nodes~$A$, thus~$a = \binom{k}{2} \cdot (\binom{k}{2}+k+1)$.
  Now, add a \emph{vertex node} $n_v$ for each $v \in V$,
  add an \emph{edge node} $e_{u,v}$ for each $e = \{u, v\} \in E$,
  and add arcs $\{e_{u, v} \rightarrow n_u, e_{u, v} \rightarrow n_v\}$.
  For each edge~$e=\{u,v\}\in E$, let $L_e$ be the color-pair nodes corresponding to the
  colors of $u$ and $v$ and add arcs $\{e_{u, v} \to \ell \mid \ell
  \in L_e \}$.
  Finally, set the influence weights of all arcs to 1.
  
  Let $G'$ be the influence graph obtained by the above construction
  and notice that $G'$ is a DAG.
  We show that there is a $k$-vertex multi-colored clique in~$G$ if
  and only if there is a size-$b$ set of effectors that incurs a
  cost of at most~$c$ in~$G'$.
   
  Suppose that there is a multi-colored clique with $k$ vertices in
  $G$. Let $X$ be the edge nodes corresponding to the edges of this
  clique.
  Clearly, $|X| = \binom{k}{2} = b$. These effectors activate all
  color-pair nodes, that is, the complete target set $A$ with
  probability 1.
  Furthermore, the non-active edge and vertex nodes corresponding to the
  clique are activated, and a total cost of $\binom{k}{2} + k = c$ is
  incurred.
   
  For the reverse direction, let $X$ be a size-$b$ set of effectors
  that incurs a cost of at most $c$ in $G'$. Directly picking a vertex
  node is not optimal, since they are non-target nodes without outgoing arcs.
  Hence, they can only increase the cost.
  Also, without loss of generality, we can assume that~$X$ does not contain a color-pair node $x$.
  To see this, assume the contrary and
  suppose that~$X$ contains at least one edge node which 
  influences $x$. Then $X \setminus \{x\}$ is a solution with equivalent cost
  and smaller budget. In the other case, suppose that no such edge node is in $X$.   
  Then, we pay for at least $k+1$ other nodes corresponding to the
  same color-pair as $x$ since we can only take $b = \binom{k}{2}$ out of
  $\binom{k}{2} + k + 1$ nodes. Directly picking an edge node instead of $x$ incurs a cost of at most 3.
  By assumption, $k > 1$, that is, any optimal solution
  can be replaced by one that chooses only edge nodes as effectors.
  Now, in order to avoid a cost higher than $\binom{k}{2}+k = c$, 
  every color-pair node must be directly activated by an edge node.
  Then $X$ must contain exactly $\binom{k}{2}$ edge nodes, one for each color pair. A cost of at most $\binom{k}{2}+k$ is only obtained if they activate at most $k$ vertex nodes, i.e., the edges corresponding to the chosen edge nodes must form
  a multi-colored clique with $k$ vertices.

  We continue with the second statement,
  namely, that \probEffectors, parameterized by the combined parameter $(b, c)$,
  is $\wtwo$-hard, even if $r = 0$ and~$G$ is a DAG.
  We provide a parameterized reduction from the $\wtwo$-complete
  \probDominatingSet problem~\cite{DF13}.
  \probDef
    {\probDominatingSet}
    {A simple and undirected graph $G = (V, E)$, $k \in \N$.}
    {Is there a vertex subset $D \subseteq V$ such that $|D| \leq k$ and for each $v \in V$ either $v \in D$ or $\exists v' \in D$ such that $\{v, v'\} \in E$?}
  Consider an instance $(G = (V, E), k)$ of \probDominatingSet. We construct an instance for \probEffectors with $b = c = k$, and obtain the influence graph (see \autoref{fig:DominatingSetReduction} for an illustration) as follows:
  Add a node $i_v$ and a set of nodes $\{ c_{v,1} , \ldots, c_{v, k+1} \}$ for each vertex $v \in V$. Let us call these the \emph{initiator} and \emph{copies} of $v$, respectively. We connect each initiator of $v$ to all of its copies by adding arcs $\{ i_v \rightarrow c_{v,1} ,\ldots i_v \rightarrow c_{v, k+1} \}$. In a similar fashion, for each edge $\{u, v\} \in E$, we connect the initiator of $u$ to all copies of $v$ and vice versa. Finally, let the set of target nodes $A$ contain all copies of vertices and set the influence weight of all arcs to 1.
  
  \begin{figure}[t]
  \centering
  \begin{tikzpicture}[>=stealth, scale=0.8]
    \tikzstyle{active}=[circle,draw,fill=black,minimum size=5pt,inner sep=3pt]
    \tikzstyle{inactive}=[circle,draw,minimum size=5pt,inner sep=3pt]
    
    \node[inactive] (i1) at (0,4) {};
    \node[inactive] (i2) at (0,3) {};
    \node at (0,2) {$\vdots$};
    \node[inactive] (in) at (0,1) {};
    
    \node[active] (c10) at (3,4) {};
    \node[active] (c11) at (4,4) {};
    \node at (5,4.0) {$\ldots$};
    \node[active] (c1k) at (6,4) {};
    
    \node[active] (c20) at (3,3) {};
    \node[active] (c21) at (4,3) {};
    \node at (5,3) {$\ldots$};
    \node[active] (c2k) at (6,3) {};
    
    \node[active] (cn0) at (3,1) {};
    \node[active] (cn1) at (4,1) {};
    \node at (5,1) {$\ldots$};
    \node[active] (cnk) at (6,1) {};
     
    \draw[rounded corners=8pt,line width=1.2pt] (2.5,4.4) rectangle (6.5,3.6);
    \draw[rounded corners=8pt,line width=1.2pt] (2.5,3.4) rectangle (6.5,2.6);
    \node at (4.5,2) {$\vdots$};
    \draw[rounded corners=8pt,line width=1.2pt] (2.5,1.4) rectangle (6.5,0.6);
   
    \draw[thick,->] (i1) -- (2.5,4);
    \draw[thick,->] (i2) -- (2.5,3);
    \draw[thick,->] (in) -- (2.5,1);
    
    \draw[thick,->] (i1) -- (2.5,3);
    \draw[thick,->] (i2) -- (2.5,4);
    
    \draw[decorate,decoration={brace,mirror,raise=6pt,amplitude=10pt}, thick] (-0.2,4.2)--(-0.2, 0.8);
    \node[left] at (-1, 2.5) {initiators};
    
    \draw[decorate,decoration={brace,raise=6pt,amplitude=10pt}, thick] (2.6,4.4) -- (6.4,4.4);
    \node at (4.5,5.3) {$k+1$ copies};
  \end{tikzpicture}
  \caption{Illustration of the influence graph in the reduction from
    \probDominatingSet. The vertices corresponding to the two
    initiators at the top are neighbors in the input graph.
    An arc from an initiator to a set of copies is used to represent
    $k+1$ arcs, one to each copy.
    All arcs have an influence weight of 1.}
  \label{fig:DominatingSetReduction}
\end{figure}
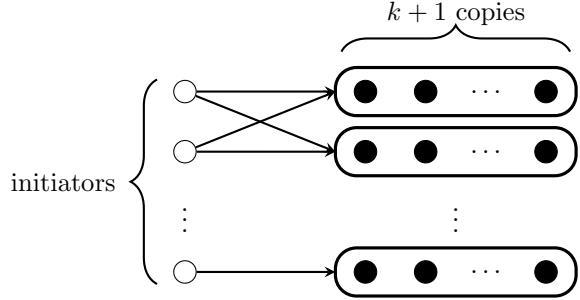
  
  Let $G'$ be the influence graph obtained in the construction and
  note that~$G'$ is a DAG.
  We show that there is a size-$k$ set~$D$ that dominates all vertices in~$G$ if and only if there is a size-$k$ set~$X$ of effectors that incurs a cost of at most~$k$ in~$G'$.
  Suppose that~$D$ is a $k$-dominating set for $G$. Let $X$ be the initiators of vertices in~$D$. These effectors activate all copies of vertices, i.e., the complete target set~$A$ with probability 1. Clearly, $|X| = k = b$ and a cost of $k = c$ is incurred for picking the initiators as effectors.
  
  For the reverse direction, let $X$ be a size-$k$ set of effectors
  that incur a cost of at most~$k$ in $G'$. Consider a solution in which we directly
  pick a copy~$x$ of a vertex~$v$ as an effector. Suppose that~$X$ contains the initiator of~$v$ or one of its neighbors. 
  Then $X \setminus \{x\}$ is a solution with equivalent cost and smaller budget. In the other case, suppose that $X$ contains no such initiator. Then, we pay for at least
  one other copy of $v$ since we can take at most $k$ out of $k+1$
  copies.
  Therefore, any optimal solution can be replaced by one that chooses
  only initiators as effectors. Now, every copy must be directly
  activated by an initiator to avoid a cost higher than $k$. 
  Furthermore, $X$ can contain at most $k$ initiators. These initiators can 
  only influence copies of their corresponding vertex or its neighbors,  
  that is, the vertices corresponding to the chosen initiators are a
  $k$-dominating set.   
\end{proof}

\subsection{Special Case: Unlimited Budget}\label{section:infiniteBudget}
Here, we concentrate on a model variant where we are allowed to choose any number of effectors,
that is, the goal is to minimize the overall cost with an unlimited budget of effectors.
In general, \probEffectors with unlimited budget remains intractable, though.

\begin{theorem}\label{thm:inftyNP}
  If $\p \neq \np$,
  then \probEffectors, even with unlimited budget, is not
  polynomial-time solvable on DAGs.
\end{theorem}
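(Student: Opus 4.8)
The plan is to give a polynomial-time many-one reduction from an $\np$-hard problem; \probSetCover (or, essentially equivalently, \probDominatingSet) is a natural candidate, since the paper's approximation-hardness discussion already routes through it. One structural point shapes the whole construction: with unlimited budget \emph{and} $r=0$, the problem is essentially \probClosure. Indeed, picking a set $X$ of effectors amounts to picking the set $R$ of nodes reachable from $X$ (which can be an arbitrary closure, i.e.\ arc-closed set), and then $C_A(G,X)=|A|-\sum_{v\in R}\pm 1$; maximizing the signed count over closures is solved by a min-cut computation. Hence the $r=0$ case is polynomial-time solvable, and the reduction must use arcs of weight strictly below $1$ in an essential way.

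Concretely, given an instance $(\mathcal U,\mathcal S,k)$ of \probSetCover I would build a DAG with one node per set in $\mathcal S$ and one ``element gadget'' per $u\in\mathcal U$, with an arc from the node of $S$ into the gadget of $u$ whenever $u\in S$; orienting everything away from the set-nodes keeps the graph acyclic. The gadgets and the target set $A$ (which will be large) are to be designed so that the cost drops below a threshold $c$ exactly when one switches on, with probability $1$, a family of set-nodes that forms a set cover, while each switched-on set-node is charged precisely once. Because the budget is unlimited, the real work of the probabilistic weights inside each gadget is to block the two obvious shortcuts: picking target nodes directly (which would otherwise be free) and activating a gadget only partially. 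The weights and the number of ``charging'' nodes should be tuned so that a partially activated or a directly activated gadget is strictly more expensive than one cleanly fed by a covering set-node, and so that ``$C_A\le c$'' becomes equivalent to ``there is a set cover of size $\le k$''. Note both $a=|A|$ and $c$ will be super-constant, consistent with the open questions in \autoref{table:results}.

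The forward direction should then be routine: from a set cover of size $\le k$ take $X$ to be the corresponding set-nodes (plus, if needed, a fixed set of auxiliary nodes) and verify $C_A(G,X)\le c$ by a direct per-gadget computation. The reverse direction is the substantive part and the expected main obstacle: given any $X$ with $C_A(G,X)\le c$, one must normalize it—without increasing the cost—into a canonical solution of the above form and then read off a small set cover. I expect this to require a careful analysis of $C_A(G,X)$ as a function of the probability with which each gadget's entry point is activated, using convexity/monotonicity of the per-gadget contribution to argue that partial activations and direct selections of internal or target nodes are never profitable and that the number of fully ``on'' set-nodes is faithfully reflected in the total cost. Making all of these inequalities tight simultaneously, by the right choice of weights and of the multiplicities of the charging nodes, is where the bulk of the effort lies.
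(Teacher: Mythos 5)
Your structural observation is correct and important: with $b=\infty$ and $r=0$ the problem collapses to \probClosure and is polynomial-time solvable, so probabilistic arcs must carry the hardness. But the \probSetCover reduction you sketch is, I believe, unworkable, and the step you defer to ``tuning the weights'' is not a technical tightening but the point where the construction breaks. In the independent cascade model, once a node is active its downstream influence is completely independent of \emph{how} it became active, and with unlimited budget you may always put a gadget's entry point directly into $X$. So whenever a gadget's preferred state is ``entry active with probability $1$'' (which is what a covering set-node would achieve), direct selection of the entry yields exactly the same downstream cost while bypassing the set-nodes entirely; moreover the per-gadget cost is an \emph{affine} function of the entry's activation probability, so there is no convexity to make intermediate activation strictly worse than both extremes. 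The multiplicity trick used elsewhere in the paper ($k+1$ copies) also dies here, since with $b=\infty$ one can simply select all copies. I therefore see no way to ``charge each switched-on set-node precisely once'' while blocking direct activation of the element gadgets, nor any way to encode the cardinality bound $k$ of \probSetCover without a budget.

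The paper draws the hardness from an entirely different source: a reduction from the decision version of the $\#\p$-hard \probSTConnectness problem (is the number of subgraphs connecting $s$ to $t$ at least $z$?). Arcs on $s$--$t$-paths get weight $1/2$, deterministic pendant copies force every optimal solution to be either $\emptyset$ or $\{s\}$, and one calibrated arc $s\to s'$ of weight $1-p_{z'}$ turns the comparison between these two candidates into a comparison of $p(t\mid\{s\})$ --- which equals the connecting-subgraph count times $2^{-|E_{st}|}$ --- against a threshold. That is, the intractability of unlimited-budget \probEffectors on DAGs stems from the $\#\p$-hardness of merely \emph{evaluating} the cost of two candidate solutions, not from a combinatorial covering structure; your proposal misses this idea, and the covering-based route substituted for it does not close.
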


\begin{proof}
  We consider the following $\#\p$-hard~\cite{VLG1979A} counting problem.
  \probSharpDef
    {\probSTConnectness}
    {A directed acyclic graph $G = (V, E)$, two vertices $s, t \in V$.}
    {Number of subgraphs of $G$ in which there is a directed path from $s$ to $t$.}
  
  In the following, let~$\#_{st}(G)$ denote the number of subgraphs
  of~$G$ in which there exists a directed path from~$s$ to~$t$ (where distinct isomorphic subgraphs are considered different).
  We give a polynomial-time reduction from the decision version of
  \probSTConnectness, which asks whether~$\#_{st}(G)$ is at least
  a given integer~$z$.
  
  Let~$I=(G=(V,E), s, t, z)$ be an instance of the decision version of \probSTConnectness.
  We create an \probEffectors instance $I'=(G' = (V', E', w), \allowbreak A, b, c)$
  as follows.
  Let~$V_{st}\subseteq V$ be the set of vertices that
  lie on some directed path from~$s$ to~$t$ and let~$E_{st}\subseteq
  E$ be the set of arcs of all directed paths from~$s$ to~$t$.
  Further, let~$W:=V_{st}\setminus \{s,t\}$.
  Clearly, it holds~$\#_{st}(G) = \#_{st}(G[V_{st}]) \cdot~2^{|E\setminus E_{st}|}$ since~$\#_{st}(G[V\setminus V_{st}])=0$.
  Thus, in order to decide whether~$\#_{st}(G) \geq z$, we have to
  decide whether~$\#_{st}(G[V_{st}]) \geq z'$, where~$z':=\lceil z\cdot
  2^{-|E\setminus E_{st}|}\rceil$.
  
  We initialize $G'$ as the induced subgraph~$G[V_{st}]$
  and set~$w(v\to u):=1/2$ for each~$v\to u\in E_{st}$.
  We further create a copy $v'$ for each vertex $v\in W$,
  and add the arc~$v\to v'$ with $w(v\to v'):=1$.
  We also create a copy $s'$ of $s$,
  and add the arc~$s\to s'$ with $w(s\to s'):=1-p_{z'}$, where $p_{z'} :=
  z'\cdot 2^{-|E_{st}|}$.
  Finally, we set $A:=W\cup\{s\}$, $b = \infty$, and $c := |W|+1-2^{-|E_{st}|}$.
  The construction is illustrated in \autoref{fig:inftyNP}.
  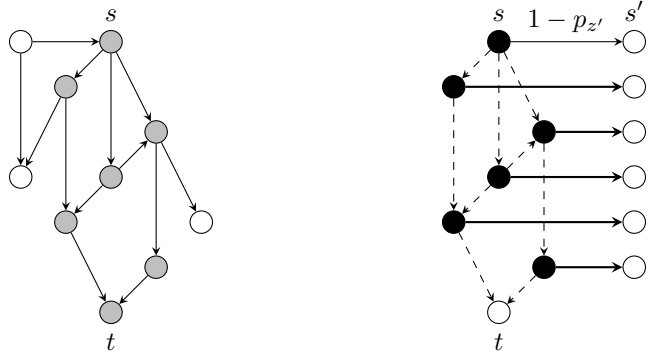
\begin{figure}[t]
    \centering
    \begin{tikzpicture}[>=stealth,scale=0.6]
      \tikzstyle{active}=[circle,draw,fill=black,minimum size=5pt,inner sep=3pt]
      \tikzstyle{inactive}=[circle,draw,minimum size=5pt,inner sep=3pt]

      \node[inactive,label=above:$s$,fill=lightgray] (s) at (0,6) {};
      \node[inactive,label=below:$t$,fill=lightgray] (t) at (0,0) {};
      \node[inactive,fill=lightgray] (v1) at (-1,5) {};
      \node[inactive,fill=lightgray] (v2) at (1,4) {};
      \node[inactive,fill=lightgray] (v3) at (0,3) {};
      \node[inactive,fill=lightgray] (v4) at (-1,2) {};
      \node[inactive,fill=lightgray] (v5) at (1,1) {};
      \node[inactive] (v6) at (-2,6) {};
      \node[inactive] (v7) at (2,2) {};
      \node[inactive] (v8) at (-2,3) {};

      \draw[->] (s) -- (v1);
      \draw[->] (s) -- (v2);
      \draw[->] (s) -- (v3);
      \draw[->] (v1) -- (v4);
      \draw[->] (v2) -- (v5);
      \draw[->] (v3) -- (v2);
      \draw[->] (v3) -- (v4);
      \draw[->] (v4) -- (t);
      \draw[->] (v5) -- (t);
      \draw[->] (v6) -- (s);
      \draw[->] (v6) -- (v8);
      \draw[->] (v1) -- (v8);
      \draw[->] (v2) -- (v7);        
    \end{tikzpicture}
    \hspace{8em}
    \begin{tikzpicture}[>=stealth,scale=0.6]
      \tikzstyle{active}=[circle,draw,fill=black,minimum size=5pt,inner sep=3pt]
      \tikzstyle{inactive}=[circle,draw,minimum size=5pt,inner sep=3pt]

      \node[active,label=above:$s$] (s) at (0,6) {};
      \node[inactive,label=above:$s'$] (s') at (3,6) {};
      \node[inactive,label=below:$t$] (t) at (0,0) {};
      \node[active] (v1) at (-1,5) {};
      \node[active] (v2) at (1,4) {};
      \node[active] (v3) at (0,3) {};
      \node[active] (v4) at (-1,2) {};
      \node[active] (v5) at (1,1) {};
      \node[inactive] (v1') at (3,5) {};
      \node[inactive] (v2') at (3,4) {};
      \node[inactive] (v3') at (3,3) {};
      \node[inactive] (v4') at (3,2) {};
      \node[inactive] (v5') at (3,1) {};

      \draw[->,dashed] (s) -- (v1);
      \draw[->,dashed] (s) -- (v2);
      \draw[->,dashed] (s) -- (v3);
      \draw[->,dashed] (v1) -- (v4);
      \draw[->,dashed] (v2) -- (v5);
      \draw[->,dashed] (v3) -- (v2);
      \draw[->,dashed] (v3) -- (v4);
      \draw[->,dashed] (v4) -- (t);
      \draw[->,dashed] (v5) -- (t);
      \draw[->] (s) -- (s') node[midway, above] {$1-p_{z'}$};
      \draw[->,thick] (v1) -- (v1');
      \draw[->,thick] (v2) -- (v2');
      \draw[->,thick] (v3) -- (v3');
      \draw[->,thick] (v4) -- (v4');
      \draw[->,thick] (v5) -- (v5');
    \end{tikzpicture}
    \caption{Example illustrating the construction in the proof of~\autoref{thm:inftyNP}. 
      Left: A directed acyclic graph with two distinguished vertices~$s$ and~$t$, where the gray vertices lie on a directed~$s$-$t$-path.
      Right: The corresponding influence graph with target nodes colored in black.
      Dashed arcs have an influence weight of~1/2 and thick arcs have an influence weight of~1.
    }
    \label{fig:inftyNP}
  \end{figure}
  
  In the following, we prove two claims used to show the correctness
  of the above reduction.
  First, we claim that an optimal solution~$X$ of~$I'$ either
  equals~$\emptyset$ or~$\{s\}$.
  This can be seen as follows.
  Choosing~$s'$, $t$, or any copy $v'$ to be an effector is never optimal
  as these are all non-target nodes without outgoing arcs.
  Now, assume that~$X$ contains a node~$v\in W$ and let~$X':=X\setminus\{v\}$.
  Then, we have
  \begin{align*}
    C_A(G',X)-C_A(G',X') = &p(s|X')-p(s|X) + p(s'|X)-p(s'|X') +\\
    &\sum_{u\in W}(p(u|X') - p(u|X) + p(u'|X) - p(u'|X')) +\\
    &p(t|X)-p(t|X').
  \end{align*}
  Since~$G'$ is a DAG, it holds that there is no directed path from~$v$ to~$s$ and thus
  $p(s|X')=p(s|X)$ and consequently also~$p(s'|X')=p(s'|X)$,
  Moreover, note that $p(u|X)=p(u'|X)$ and~$p(u|X')=p(u'|X')$ holds for all~$u\in W$,
  and~$p(t|X) \ge p(t|X')$ clearly holds since~$X'\subseteq X$.
  Hence, $C_A(G',X)-C_A(G',X')\ge 0$ and
  therefore~$X'$ is also an optimal solution not containing~$v$,
  which proves the claim.

  Next, we claim that~$p(t|\{s\})=\#_{st}(G'[V_{st}])\cdot
  2^{-|E_{st}|}$.
  To prove this, we define an~\emph{$s$-$t$-scenario}~$S\subseteq
  E_{st}$ to be a subset of arcs such that~$\{s,t\}\subseteq V(S)$ and there
  is a directed path from~$s$ to each~$v\in V(S)$ in~$G[S]$.
  Let~$S^*:=\{v\to u\in E_{st}\mid v\in V(S)\}$
  denote the set of all outgoing arcs from nodes in~$V(S)$.
  We denote the set of all $s$-$t$-scenarios by~$\mathcal{S}_{st}$.
  Note that each scenario~$S$ constitutes a possible propagation
  in which exactly the arcs in~$S$ activated their endpoints and
  the arcs in~$S^*\setminus S$ did not activate their endpoints.
  The probability~$q(S)$ for a given $s$-$t$-scenario~$S$ to occur is
  thus~$2^{-|S^*|}$.
  Clearly, we can write
  \[p(t|\{s\})=\sum_{S\in\mathcal{S}_{st}}q(S)=\sum_{S\in\mathcal{S}_{st}}2^{-|S^*|}=
    2^{-|E_{st}|}\cdot\sum_{S\in\mathcal{S}_{st}}2^{|E_{st}\setminus S^*|}.\]

  Now, for a subset~$F\subseteq E_{st}$ of arcs where~$s$ is
  connected to~$t$ in the subgraph~$G'[F]$, let~$sc(F)$ denote
  the scenario~$S\in\mathcal{S}_{st}$ where~$S\subseteq F$ and
  $S\subsetneq S'$ for all~$S'\neq S\in \mathcal{S}_{st}$ such that~$S' \subseteq F$.
  It holds that~$F = S \cup F^*$, where~$S:=sc(F)$ and
  $F^*:=F\setminus S\subseteq E_{st}\setminus S^*$.
  Hence, we have~$\#_{st}(G'[V_{st}])=\sum_{S\in\mathcal{S}_{st}}2^{|E_{st}\setminus
    S^*|}$, which proves the claim.

  We now decide the instance~$I$ as follows.
  Note that~$C_A(G',\emptyset)=|W|+1$
  and~$C_A(G',\{s\})=|W|+1-p_{z'}+p(t|\{s\})$.
  Therefore, if~$I'$ is a ``yes''-instance, then~$\{s\}$ is the
  optimal solution with $|W|+1-p_{z'}+p(t|s) \leq c =
  |W|+1-2^{-|E_{st}|}$, which implies~$p_{z'} - p(t|\{s\}) \geq 2^{-|E_{st}|}$.
  It follows that $\#_{st}(G'[V_{st}]) < z'$.
  Therefore, $I$~is a ``no''-instance.
  If~$I'$ is a ``no''-instance, then~$p_{z'} - p(t|\{s\}) <
  2^{-|E_{st}|}$, which implies~$\#_{st}(G'[V_{st}]) \geq z'$,
  hence~$I$ is a ``yes''-instance.     
\end{proof}

With unlimited budget, however, \probEffectors becomes fixed-parameter
tract\-able with respect to the parameter number $r$ of probabilistic arcs.

\begin{theorem}\label{thm:inftyFPTr}
  If $b = \infty$, 
  then \probEffectors is solvable in $O(4^r\cdot n^4)$ time,
  where~$r$ is the number of probabilistic arcs.
\end{theorem}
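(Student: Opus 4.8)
The plan is to turn \probEffectors with $b=\infty$, after branching on the $r$ probabilistic arcs, into polynomially many instances of \probClosure (maximum weight closure). Write $E_p$ for the set of the $r$ probabilistic arcs; for $S\subseteq E_p$ let $q(S):=\prod_{e\in S}w(e)\prod_{e\in E_p\setminus S}(1-w(e))$, let $G_S$ be the deterministic graph consisting of all weight-$1$ arcs together with the arcs in~$S$, and let $R_S(X)$ be the set of nodes reachable from~$X$ in~$G_S$. The starting point is the standard ``live-arc'' reformulation of the IC model (the same device already used in the proof of \autoref{thm:inftyNP}): for every seed set~$X$, $p(v\mid X)=\sum_{S\subseteq E_p:\,v\in R_S(X)}q(S)$. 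Setting $\sigma(v):=-1$ for $v\in A$ and $\sigma(v):=+1$ otherwise, this gives $C_A(G,X)=|A|+\sum_{v}\sigma(v)\,p(v\mid X)$; since the constraint $|X|\le b$ is vacuous here, it suffices to minimize $\sum_{v}\sigma(v)\,p(v\mid X)$ over all $X\subseteq V$.

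The heart of the proof is a structural lemma confining the dependence on~$X$ to a bounded amount of information. Let $G_{\mathrm{det}}$ be the subgraph of weight-$1$ arcs, $\widehat G$ its condensation (a DAG on the strongly connected components), and $T$ the set of tails of probabilistic arcs, so $|T|\le r$; note that every arc of $E_p$ has its tail in~$T$. For a seed set~$X$ put $D:=\cl(X)$ in $G_{\mathrm{det}}$, which is closed under the out-arcs of~$\widehat G$. I would prove, by a least-fixed-point argument mirroring the propagation, that for every $S\subseteq E_p$ one has $R_S(X)=D\cup\Psi(S,D\cap T)$, where $\Psi(S,\tau)$ depends only on $S$, on $\tau\subseteq T$, and on the deterministic structure, and can be computed by three linear-time reachability passes (roughly: $\Psi(S,\tau)$ is the set of nodes reachable from~$\tau$ in~$G_S$ via a path using at least one arc of~$S$). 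With $\pi(v,\tau):=\sum_{S\subseteq E_p:\,v\in\Psi(S,\tau)}q(S)\in[0,1]$ and the constant $K(\tau):=\sum_{v}\sigma(v)\pi(v,\tau)$, a short calculation (using $\sum_S q(S)=1$) then yields
\[
  C_A(G,X)\;=\;|A|+K(D\cap T)+\sum_{v\in D}\sigma(v)\bigl(1-\pi(v,D\cap T)\bigr).
\]
Moreover every out-closed set~$D$ of~$\widehat G$ equals $\cl(X)$ for some~$X$ (take $X:=D$), so minimizing $C_A(G,X)$ over~$X$ is exactly minimizing the right-hand side over out-closed sets~$D$.

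Given the lemma, the algorithm branches twice. The outer branch enumerates the at most $2^r$ subsets $\tau\subseteq T$ as candidates for $D\cap T$; fixing~$\tau$ freezes the node weights $\omega_\tau(v):=\sigma(v)\bigl(1-\pi(v,\tau)\bigr)$ into constants, so the objective becomes the \emph{linear} function $\sum_{v\in D}\omega_\tau(v)$ to be minimized over out-closed sets~$D$ of~$\widehat G$ subject to the hard constraints ``every component meeting~$\tau$ lies in~$D$'' and ``no component meeting $T\setminus\tau$ lies in~$D$'' (value $+\infty$ if these are contradictory, e.g.\ because down-closure forces a forbidden component). This is precisely a \probClosure instance on the DAG~$\widehat G$ with component weights $\sum_{v\in C}\omega_\tau(v)$ and forced/forbidden components, solvable in polynomial time via one minimum-cut computation. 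To supply the weights, inside each $\tau$-branch I run the inner branch over all $2^r$ fire patterns~$S$, computing each $\Psi(S,\tau)$ in $O(n+m)$ time and accumulating $q(S)$ into $\pi(\cdot,\tau)$; this also produces $K(\tau)$. The algorithm returns the minimum over~$\tau$ of $(\text{minimum closure weight})+K(\tau)+|A|$ and compares it with~$c$.

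For the running time: the inner branch costs $O(2^r(n+m))$ per~$\tau$, hence $O(4^r(n+m))=O(4^r n^2)$ in total, and solving one maximum-weight-closure instance (a max-flow on an $O(n)$-vertex, $O(n+m)$-arc network, e.g.\ by Dinic's algorithm) costs $O(n^4)$, hence $O(2^r n^4)$ in total; together this is $O(4^r n^4)$. The main obstacle is the structural lemma: one must argue carefully that the deterministic reach of~$X$ and the probabilistic spread interact only through the bounded set $D\cap T$ (this is where ``every probabilistic arc's tail is in~$T$'' is used), and that optimizing over arbitrary out-closed sets~$D$ loses nothing relative to optimizing over seed sets~$X$. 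Everything after that is bookkeeping and an appeal to polynomial-time maximum weight closure.
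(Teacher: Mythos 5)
Your proposal is correct and follows essentially the same route as the paper's proof: your outer branch over $\tau\subseteq T$ is the paper's branch over $X_p\subseteq V_p$ (the tails of probabilistic arcs contained in the deterministically closed solution), your inner enumeration of live-arc patterns $S\subseteq E_p$ plays the role of the paper's invocation of the $O(2^r\cdot n(n+m))$ cost-computation subroutine, and your \probClosure instance on the condensation with forced/forbidden components is equivalent to the paper's \probClosure instance on $G[V']$ after explicitly removing $\dcl(X_p)$ and $\idcl(V_p\setminus X_p)$. The structural lemma you isolate ($R_S(X)=D\cup\Psi(S,D\cap T)$, so the probabilistic spread interacts with the seed set only through $D\cap T$) is exactly the paper's claim that nodes in $V_o$ are influenced only by $X_o$ while the rest reduces to a deterministic closure choice, and the running-time accounting matches.
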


\newcommand{\Xplus}{X_{o}}
\newcommand{\Xp}{X_p}
\newcommand{\Yplus}{Y_{o}}
\newcommand{\Yp}{Y_p}
\newcommand{\Vplus}{V_{o}}
\newcommand{\Vp}{V_p}

\begin{proof}
\begin{figure}[t]
  \centering
  \begin{tikzpicture}[>=stealth]
    \tikzstyle{active}=[circle,draw,fill=black,minimum size=5pt,inner sep=3pt]
    \tikzstyle{inactive}=[circle,draw,minimum size=5pt,inner sep=3pt]
    \tikzstyle{xaura}=[circle,draw,minimum size=5pt,inner sep=3.5pt, line width=3pt,color=black!40]
    \tikzstyle{prob}=[circle,draw,minimum size=5pt,inner sep=0pt]
                \foreach \x/\y [count = \xi] in {1.5/3,3.5/3}{
                   \node[xaura] () at (\x,\y) {};
                   \node[prob] (Vp\xi) at (\x,\y) {\sf X};
    }           
                \foreach \x/\y [count = \xi from 3] in {7.5/3,9.5/3}{
                   \node[prob] (Vp\xi) at (\x,\y) {\sf X};
    }   
                \foreach \x/\y [count = \xi] in {0/2, 1/5, 4.8/0.5, 5.5/1.5, 6.2/0.5,5.5/4, 8/5,8/1,10/5,10/1 }{
                   \node[inactive] (Vy\xi) at (\x,\y) {};
    }
                \foreach \x/\y [count = \xi] in {1/1,3/1,3/5,0/4,0/2, 4.5/5,6/5}{
                   \node[xaura] () at (\x,\y) {};
                   \node[inactive] (Vx\xi) at (\x,\y) {};
    }   
                \foreach \s/\t in {p1/x1,p1/x2,p2/x2,p4/y8,x4/x5, x3/p1,x3/p2, y2/x3, y3/y5,y5/y4,y4/y3,y6/p2, y7/p3,y7/p4,y9/p4,x6/x7} {
                                \draw[->] (V\s) -- (V\t);
                }
                \foreach \s/\t in {p1/y1,p2/y3,p2/p3,p3/y5,p3/y6, p4/y10} {
                                \draw[->, dashed] (V\s) -- (V\t);
                }
                
                \draw[rounded corners=8pt,line width=1.2pt] (0.4,3.6) rectangle (10.6,2.4);
                \node at (5.5,3.2) {$\Vp$};
                \draw[rounded corners=8pt,line width=1.2pt] (0.5,3.5) rectangle (4.0,0.5);
                \node at (2.5,3) {$\Xp$};
                \node at (1.7,1.8) {$\Xplus$};
                \draw[rounded corners=8pt,line width=1.2pt] (7.0,5.5) rectangle (10.5,2.5);
                \node at (8.5,3) {$\Yp$};
                \node at (9.2,4.2) {$\Yplus$};

                \node at (6.5,4) {$V'$};
                
  \end{tikzpicture}
  \caption{Illustration for~\autoref{thm:inftyFPTr}.
  Effectors of a solution are marked with an aura. 
  Probabilistic arcs are dashed, and nodes of $\Vp$ (with an outgoing probabilistic arc) are marked with a cross. 
  For readability, target nodes are not represented. Intuitively, the algorithm guesses the partition of $\Vp$ 
  into $\Xp$ (effectors) and $\Yp$ (non-effectors). Node set $\Xp$ (respectively,~$\Yp$) is then extended to its closure 
  $\Xplus$ (respectively, its closure $\Yplus$ in the reverse graph). The remaining nodes form a deterministic subgraph $G[V']$, 
  in which effectors, forming the set~$X'$, 
  are selected by solving an instance of {\sc Maximum Weight Closure}. }
  \label{fig:inftyFPTr}
\end{figure}
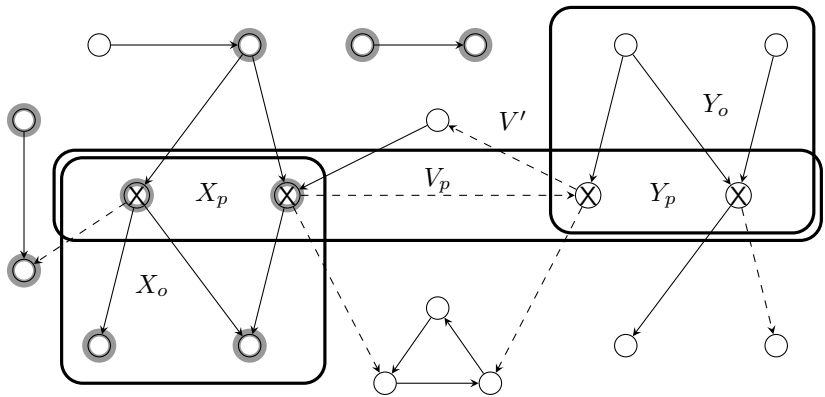

  The general idea is to fully determine the probabilistic aspects of the graph,
  and then to remove all of the corresponding nodes and arcs.
  We can show that this leaves an equivalent ``deterministic graph'' that we can solve using a reduction to 
  the problem \probClosure,
  which is itself polynomial-time solvable by a polynomial-time reduction to a flow maximization problem~\cite[Chapter~19]{AMO93}.
  \probSharpDef
    {\probClosure}
    {A directed graph $G = (V, E)$ with weights on the vertices.}
    {A maximum-weight set of vertices $X \subseteq V$  with no arcs going out of the set.}    
  We start with some notation (see~\autoref{fig:inftyFPTr} for an illustration).
  For an input graph $G = (V, E)$,
  let $E_p:=\{u\to v\in E\mid w(u\to v)<1\}$ denote the set of
  probabilistic arcs
  and let $\Vp := \{u \mid u \to v \in E_p\}$ denote the set of
  nodes with at least one outgoing probabilistic arc.
  For a node $v\in V$, let
  $\dcl(v)$
  ($\idcl(v)$)
  denote the set of all nodes $u$ such that there exists at least
  one deterministic path
  from $v$ to $u$
  (respectively, from $u$ to $v$),
  where a deterministic path is a path containing only deterministic arcs.
  We extend the notation to subsets $V'$ of $V$ and write
  $\dcl(V') = \bigcup_{v \in V'} \dcl(v)$
  and
  $\idcl(V') = \bigcup_{v \in V'} \idcl(v)$.
  We call a subset~$V'\subseteq V$ of nodes
  \emph{deterministically closed} if and only if~$\dcl(V')=V'$,
  that is, there are no outgoing deterministic arcs from~$V'$ to~$V\setminus V'$.
  
  Our algorithm will be based on a closer analysis of
  the structure of an optimal solution.
  To this end, let~$G=(V,E,w)$ be an input graph with a
  set~$A\subseteq V$ of target nodes and let~$X\subseteq V$ be an
  optimal solution with minimum cost~$C_A(G,X)$.
  Clearly, we can assume that~$X$ is deterministically closed, that
  is,~$\dcl(X)=X$, since we have an infinite budget~$b=\infty$.
         
        We write~$\Vp$ as a disjoint union of $\Xp:=\Vp\cap X$ and $\Yp:=\Vp\setminus X$.
	We also use $\Xplus:=\dcl(\Xp)$, $\Yplus:=\idcl(\Yp)$, and~$\Vplus=\Xplus\cup\Yplus$.
	Since $X$ is deterministically closed, we have that $\Xplus \subseteq X$ and $\Yplus\cap X = \emptyset$.
	We write $V':=V\setminus\Vplus$ and $X':=X\setminus \Xplus=X\cap V'$. 
	Note that $X'$ is deterministically closed in~$G[V']$ and that~$G[V']$ contains only deterministic arcs.
	Moreover, note that the sets $\Xplus$, $\Yp$, $\Yplus$, $\Vplus$, and~$V'$, are directly deduced from the choice of $\Xp$,
	and that for a given~$\Xp$, the set $X'$ can be any deterministically
        closed subset of $V'$.
  
  We first show that the nodes in $\Vplus$ are only influenced by effectors in~$\Xplus$, that is, for any node~$v\in \Vplus$, it holds that
  $p(v|X)=p(v|\Xplus)$. This is clear for $v\in\Xplus$, since in this case $p(v|X)=p(v|\Xplus)=1$.
  Assume now that there is a node~$x\in X'$ with a directed path to~$v\in \Yplus$ 
	that does not contain  any node from~$\Xplus$ (if every directed path from~$x$ to~$v$ passes through~$X_0$, then clearly~$x$ does not influence~$v$). 
	Two cases are possible, depending on whether this path is deterministic.
	If it is, then, since $v\in\idcl(\Yp)$, there exists a deterministic path from $x$ to some $u\in\Yp$, via $v$.
	Hence, $x\in \idcl(\Yp)=\Yplus$, yielding a contradiction.
	Assume now that the path from $x$ to~$v$ has a probabilistic arc and write $u\to u'$ for the first such arc. Hence, $x\in \idcl(u)$ and $u\in \Vp$. Since we assumed that the path does not contain any node from~$\Xplus$, we have $u\notin \Xp$, and therefore $u\in\Yp$. Again, we have $x\in \idcl(\Yp)$, yielding a contradiction. 
        Hence, the nodes in~$\Vplus$ are not influenced by the
        nodes in~$X'$.
        
   Now consider the nodes in $V'$. Note that we have~$p(v|X)=1$ for~$v\in X'$ and $p(v|X) = p(v|\Xplus)$
  for~$v\in V'\setminus X'$, since~$G[V']$ is deterministic and~$X'$~is
  deterministically closed.
    Overall, $C_A(v,X)=C_A(v,\Xplus )$ for all $v\in V\setminus X'$. The total cost of solution~$X$ can now be written as
  \begin{align*}
    C_A(G,X) &= \smashoperator{\sum_{v\in V\setminus X'}}C_A(v,\Xplus)
      + \sum_{v\in X'}C_A(v,X) \\
      &= \smashoperator{\sum_{v\in V}}C_A(v,\Xplus)
      - \sum_{v\in X'}(C_A(v,\Xplus)-C_A(v,X)) \\
       &= \alpha(\Xplus)-\beta(\Xplus,X'),
  \end{align*}
where
  \[
   \alpha(\Xplus):=\sum_{v\in V}C_A(v,\Xplus) \text{\quad and \quad} \beta(\Xplus,X'):= \sum_{v\in X'}(C_A(v,\Xplus)-C_A(v,X)).
  \]
  We further define, for all $v\in V'$,
  $\gamma(v,\Xplus):=1-p(v|\Xplus)$ if $v\in A$, and
  $\gamma(v,\Xplus)=p(v|\Xplus)-1$ if $v\notin A$.
  Note that, for $v\in X'$, the
  difference~$C_A(v,\Xplus)-C_A(v,X)$
  is exactly $\gamma(v,\Xplus)$, hence $\beta(\Xplus,X')= \sum_{v\in X'}\gamma(v,\Xplus)$.

  \begin{algorithm}[t]
    \normalsize
    \SetAlgoNoLine
    \caption{\label{algo:inftyFPTr}Pseudocode for \probEffectors with $b=\infty.$}
    \label{alg:inftyFPTr}
    \ForEach{$\Xp\subseteq \Vp$}{
      compute $\Xplus:=\dcl(\Xp)$ \\
      compute $\Yplus:=\idcl(\Vp\setminus \Xp)$ \\
      \ForEach{$v \in V$}{
        compute $p(v|\Xplus)$ (using~\autoref{thm:costFPTr}) and $\gamma(v,\Xplus)$
      }
      compute $\alpha(\Xplus)$ \\
      compute $X'$ maximizing $\beta(\Xplus,X')$ using \probClosure on~$G[V']$, with weights $\gamma(v,\Xplus)$ \\
    }
    \Return the $\Xplus\cup X'$ which gives the minimum $\alpha(\Xplus) - \beta(\Xplus,X')$
  \end{algorithm}

  The algorithm can now be described
  directly based on the above formulas.
  Specifically, we branch over all subsets~$\Xp\subseteq \Vp$ (note that the number of these subsets is upper-bounded by $2^r$).
  For each such subset~$\Xp\subseteq \Vp$,
  we can compute $\Xplus$ and~$\Yplus$ in linear time because this
  involves propagation only through deterministic arcs
  (outgoing for $\Xplus$ and ingoing for $\Yplus$).
  Then, for each node $v\in V$, 
  we compute~$p(v | \Xplus)$ using~\autoref{thm:costFPTr}
  in $O(2^r \cdot n(n + m))$ time.
  This yields the values~$\alpha(\Xplus)$ and~$\gamma(v,\Xplus)$ for each $v\in V'$.
  By the discussion above, it remains to select a closed subset~$X'\subseteq V'$
  such that the cost $C_A(G,\Xplus\cup X')=\alpha(\Xplus) - \beta(\Xplus,X')$ is minimized. 
  That is, we have to select the subset~$X'$ that maximizes the value of~$\beta(\Xplus, X')$.
  Hence, the subset~$X'$ can be computed as the solution of \probClosure on $G[V']$ (which is solved by a maximum flow computation in $O(n^3)$ time),
  where the weight of any $v\in V'$ is~$\gamma(v,\Xplus)$.
  Finally, we return the set~$\Xplus \cup X'$ that yields the minimum cost $\alpha(\Xplus) - \beta(\Xplus, X')$.
  A pseudocode is given in~\autoref{alg:inftyFPTr}. 
\end{proof}

\subsection{Special Case: Influence Maximization}\label{section:everythingActive}
In this section, we consider the special case of \probEffectors, called \textsc{Influence Maximization}, where all nodes are targets ($A=V$).
Note that in this case the variant with unlimited budget and the
parameterization by the number of target nodes are irrelevant.

In the influence maximization case, on deterministic instances, one should intuitively choose 
effectors among the ``sources'' of the influence graph, that is, nodes 
without incoming arcs (or among strongly connected components without incoming arcs). 
Moreover, the budget~$b$ bounds the number of sources that can be selected,
and the cost~$c$ bounds the number of sources that can be left out. 
In the following theorem,
we prove that deterministic \probEffectors remains intractable even if either one of
these parameters is small,
but, by contrast,
having $b+c$ as a parameter yields fixed-parameter tractability in the deterministic case.
We mention that the first statement is proven by a reduction from the $\wtwo$-hard \textsc{Set Cover} problem,
while the second statement is proven by a reduction from the $\wone$-hard \textsc{Independent Set} problem.

\begin{theorem}
  \label{thm:infmax}
  \mbox{}
  \begin{enumerate}
    \item \textsc{Influence Maximization},
      parameterized by the maximum number $b$ of effectors,
      is $\wtwo$-hard, even if $G$ is a deterministic ($r=0$) DAG.
    \item \textsc{Influence Maximization},
      parameterized by the cost $c$,
      is $\wone$-hard, even if $G$ is a deterministic ($r=0$) DAG.
    \item If $r = 0$,
      then \textsc{Influence Maximization} can be solved in~$O(\binom{b+c}{b} \cdot (n + m))$ time.
  \end{enumerate}
\end{theorem}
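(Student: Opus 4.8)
The plan is to reduce \textsc{Influence Maximization} with $r=0$ to a combinatorial problem on the condensation and then solve that by bounded search tree. First I would observe that when $A=V$ and $r=0$, every arc has weight~$1$, so the propagation from a set $X$ is purely deterministic: $p(v\mid X)=1$ exactly when $v$ is reachable from $X$ via a directed path, and $p(v\mid X)=0$ otherwise. Hence $C_V(G,X)=|V\setminus\dcl(X)|$, the number of nodes \emph{not} reachable from $X$, and the question becomes: is there a set $X$ with $|X|\le b$ such that at most $c$ nodes are unreachable from $X$? Passing to the condensation $G'$ of $G$ (computable in linear time), reachability is preserved and, crucially, an optimal $X$ may be assumed to consist of source nodes of $G'$ (nodes with no incoming arc): replacing any chosen node by a source that can reach it never decreases the reachable set and does not increase $|X|$. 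Moreover every set of sources $S$ ``covers'' (reaches) a fixed family of nodes, and a node of $G'$ is reachable from $S$ iff the unique source(s) above it... more precisely, the set of sources is exactly the minimal elements, and choosing a source $s$ reaches exactly $\dcl(s)$; so we must choose at most $b$ sources whose closures miss at most $c$ weight (counting original SCC sizes), i.e.\ leave uncovered a set of total size $\le c$.

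Next I would set up the branching. Let $s_1,\dots,s_q$ be the sources of $G'$. We want to pick $X\subseteq\{s_1,\dots,s_q\}$, $|X|\le b$, minimizing the total size of $V\setminus\dcl(X)$. Run a search tree of depth at most $b$: maintain the current partial solution $X$ and the set $U$ of currently uncovered original nodes. If $|U|\le c$, accept. If the budget $b$ is exhausted, reject this branch. Otherwise, pick any still-uncovered node $v\in U$; $v$ lies in the closure of at least one source, and in fact the set of sources whose closure contains $v$ has size... here is the subtlety: in a DAG a node can be reachable from many sources. To keep the branching factor bounded by $b+c$ (so that the tree has size $\binom{b+c}{b}$), I would instead branch as follows. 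Since $v$ must end up covered, and covering it means adding a source from the nonempty set $R(v)=\{s_i : v\in\dcl(s_i)\}$, but $|R(v)|$ may be large. The fix is to branch on which of the (at most $c$) uncovered nodes we are \emph{willing to leave uncovered}: equivalently, think of it as a hitting-set / covering structure where at each step we either (i) discard $v$ permanently into the ``tolerated uncovered'' pile (at most $c$ such discards total), or (ii) commit to covering $v$ by adding \emph{the} source that we designate — but to make (ii) a single branch rather than $|R(v)|$ branches, we exploit that among all sources covering $v$ there is a unique ``best'' choice only if closures are nested, which need not hold.

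The cleaner route, and the one I would actually write up, is the standard $(b+c)$-way branching used for such problems: pick an uncovered node $v$; in the final solution $v$ is either left uncovered (contributes to the cost budget $c$) or it is covered, and if covered then some source in $R(v)$ is in $X$ (contributes to the effector budget $b$). So we branch into at most $1+|R(v)|$ children, but we charge each child to a distinct unit of $b+c$: the ``uncovered'' child consumes one unit of $c$, and we argue that we only ever need to \emph{try sources that lie on the boundary}, so that across the whole tree the total consumption is $b+c$. Concretely, the invariant is that every branch either decreases $b$ by one (choosing a source) or decreases $c$ by one (tolerating $v$), the tree has depth $\le b+c$ and at each node at most $b+c-(\text{depth})+1$ meaningful children, giving $O(\binom{b+c}{b})$ leaves; at each leaf we recompute $\dcl(X)$ in $O(n+m)$ time. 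The main obstacle is precisely this bounding of the branching factor: a naive ``branch over all sources covering $v$'' gives $q^{b}$, not $\binom{b+c}{b}$. I expect to resolve it by the observation that it suffices, at each step, to branch over sources $s$ with $v\in\dcl(s)$ \emph{restricted to sources not dominated by an already-forbidden choice}, together with the ``tolerate $v$'' option, and a charging argument that each internal branch uses a fresh token from the pool of size $b+c$; wrapping a $(b+c)$-bounded search tree around the linear-time reachability computation then yields the claimed $O(\binom{b+c}{b}\cdot(n+m))$ bound.
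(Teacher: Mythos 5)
Your proposal addresses only part~3 of the theorem; parts~1 and~2 (the $\wtwo$- and $\wone$-hardness claims, which the paper proves by parameterized reductions from \probSetCover and \probMaxIndependentSet, respectively) are not touched at all, so even if part~3 were complete the statement would not be proven.

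For part~3, your setup matches the paper's: deterministic propagation means the cost is the number of nodes unreachable from $X$, one passes to the condensation, and effectors may be assumed to be sources (in-degree-zero nodes) of the condensation. But from there you head into a bounded-search-tree construction and stall exactly where you say you do: you never actually bound the branching factor, and the ``charging argument'' you sketch (branch over all sources in $R(v)$ plus one ``tolerate $v$'' child, charging each child to a token from a pool of size $b+c$) does not work as stated, because a single uncovered node $v$ can be reachable from arbitrarily many sources, so a single internal node of your tree can have far more than $b+c$ children and the tree size is not $\binom{b+c}{b}$. The missing idea is much simpler and makes the entire search tree unnecessary: in the condensation, a source that is \emph{not} selected as an effector can never be activated (it has no in-neighbors), so each unselected source contributes at least~1 to the cost. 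Hence if the set $R$ of sources satisfies $|R|>b+c$, the instance is a ``no''-instance outright; otherwise $|R|\le b+c$ and one simply enumerates all $\binom{|R|}{b}\le\binom{b+c}{b}$ size-$b$ subsets of $R$, checking each by a linear-time reachability computation. This is the paper's argument, and it is the step your plan is missing.
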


\begin{proof}
  We begin with the first statement,
  namely,
  that \textsc{Influence Maximization}
  (which is equivalent to \probEffectors where all nodes are target nodes, that is, where $A = V$),
  parameterized by the maximum number $b$ of effectors,
  is $\wtwo$-hard, even if $G$ is a deterministic ($r=0$) DAG.
  We provide a parameterized reduction from the $\wtwo$-complete \probSetCover problem~\cite{DF13}.
  \probDef
    {\probSetCover}
    {Sets $S = \{S_1,\ldots,S_m\}$ over elements $U = \{u_1,\ldots,u_n\}$, and parameter $h \in \N$.}
    {Is there $S' \subseteq S$ such that $|S'| = h$ and $\bigcup_{S_i\in S'} S_i = U$?}
  Given an instance for \probSetCover, we
  create an instance for \textsc{Influence Maximization} as follows.
  Add a node $v_{S_j}$ for each set $S_j$ and write $V_S=\{v_{S_i}\mid S_i\in S\}$.
  Add a node $v_{u_i}$ for each element $u_i$ and write $V_U=\{v_{u_i}\mid u_i\in U\}$.
  For each $u_i \in S_j$, add an arc $v_{S_j} \to v_{u_i}$ with influence
  probability~$1$.
  Set $b:=h$, $c:=m - h$, and $A:= V_S\cup V_U$.
  
  We can assume that any solution~$X$ is such that  $X\subseteq V_S$ and  $|X|=b$. 
  Note that all nodes of $V_U$ are activated if and only if $S':=\{S_i\mid v_{S_i}\in V_{S'}\}$ is a set cover for $U$.
  Hence, any solution with cost 
  $c=|V_S\setminus X|=m -h$ needs to pay only for the unselected nodes of $V_S$, and yields a set cover of $U$.
  Reversely, for any set cover~$S'$ for~$U$ of size~$h$, the set~$X:=\{v_{S_i}\mid S_i\in S'\}$
  is a set of effectors with cost at most~$c$.
  
  We continue with the second statement,
  namely,
  that \textsc{Influence Maximization},
  parameterized by the cost $c$,
  is $\wone$-hard, even if $G$ is a deterministic ($r=0$) DAG.
  We provide a reduction from the following $\wone$-complete problem~\cite{DF13}.
  \probDef
    {\probMaxIndependentSet}
    {An undirected graph  $G = (V,E)$ and parameter $k \in \N$.}
    {Is there an independent set $I \subseteq V$ (i.e., no edge has both endpoints in~$I$) such that $|I|\geq  k$?}
  Consider an instance $(G=(V,E), k)$ of \probMaxIndependentSet. Create an influence graph as follows:
  For each vertex $v\in V$, add a node $n_v$ and for each edge~$e\in E$, add a node $n_e$. 
  Let $N_V:=\{n_v\mid v\in V\}$ and $N_E:=\{n_e\mid e\in E\}$.
  Add an arc $n_v \to n_e$ with influence probability~1 for each edge $e$ incident to vertex $v$ in $G$. 
  Set $c:=k$, $b:=|V|-k$, and $A:=N_V\cup N_E$.
 
  Consider any solution with cost $c$. 
  Note that we can assume any optimal solution to choose only nodes from $N_V$, since for any edge~$e=\{u,v\}\in E$ it is always better to choose either~$n_v$ or~$n_u$ instead of the node~$n_e$.
  Write $X\subseteq N_V$ for the set of effectors, $N_I=N_V\setminus X$, and $I=\{v\mid n_v\in N_I\}$.
  We have $|X|\le b$ and $|I|=|N_I|=|V|-|X|\ge |V|-b= k=c$.
  Since the cost equals~$c$, it follows that~$|N_I|=c$ and only the nodes in~$N_I$ are left inactive. Hence, no edge $e$ has both endpoints in $I$. That is, $I$ is an independent set of size $c=k$. 
  Conversely, any independent set $I$ of $G$ directly translates into a set of effectors $X=\{n_v\mid v\in V\setminus I\}$ for the created influence graph.  

  We finish with the third statement, namely, that if $r = 0$,
  then \textsc{Influence Maximization} can be solved in~$O(\binom{b+c}{b} \cdot (n + m))$ time.
  To start with, let~$G'$ be the condensation of $G$ (that is, the DAG obtained by contracting each strongly connected component (SCC) of~$G$ into one node). 
  Note that since $r=0$, we can assume that any minimal solution selects at most one node from each SCC in~$G$. Moreover, it does not matter which node of an SCC is selected since they all lead to the same activations.
  Hence, in the following, we solve \textsc{Influence Maximization} on the condensation~$G'$,
  where selecting a node means to select an arbitrary node in the corresponding SCC in~$G$.
  
  Let $R$ denote the set of nodes of $G'$ with in-degree zero. 
  Note that any node in~$R$ not chosen as an effector yields a cost of at least~1 in~$G$, since the nodes in its corresponding SCC cannot be activated by in-neighbors.
  Hence, we can assume that $|R|\leq b+c$, because otherwise the instance is a ``no''-instance.
  Moreover, we can assume that all effectors are chosen from~$R$. 
  Indeed, consider any solution selecting a node~$u\notin R$ as effector. 
  Then, $u$ has at least one in-neighbor~$v$ and selecting~$v$ instead yields the same number of effectors, while the cost can only be reduced (since at least as many nodes are activated). 
  Since the graph~$G'$ is a DAG, repeating this process yields a solution with smaller cost having only effectors in~$R$.
  Hence, it is sufficient to enumerate all possible choices of
  size-$b$ subsets of the $b+c$ nodes in $R$, 
  and check in polynomial time whether the chosen set of effectors in~$G$
  yields a cost of at most~$c$.  
\end{proof}


\subsection{Results in Contradiction with \citet{LTGMH10}.}\label{sect:flawDetails}
The following two claims of \citet{LTGMH10} are contradicted by the results presented in our work.

According to~\citet[Lemma 1]{LTGMH10}, in the {\sc Influence Maximization} case with cost value $c=0$, 
\probEffectors is NP-complete.
The reduction is incorrect: it uses a target node $\ell$ which influences all other vertices with probability 1 (in at most two steps). 
It suffices to select $\ell$ as an effector in order to activate all
vertices, so such instances always have a trivial solution
($X=\{\ell\}$), and the reduction collapses.
On the contrary, we prove in our \autoref{lem:det_c0_p} that all instances with $c=0$ can be solved in linear time. 

According to the discussion of~\citet{LTGMH10} following
their Corollary~1, there exists a polynomial-time algorithm for \probEffectors with deterministic instances (meaning $r=0$).
Their model coincides with our model in the case of 
{\sc Influence Maximization}. However, the given algorithm is flawed: it does not consider the influence 
\emph{between} different strongly connected components. Indeed, as we prove in \autoref{thm:infmax}, 
finding effectors under the deterministic model is NP-hard, even in
the case of {\sc Influence Maximization}.

\section{Conclusion}
Inspired by work of \citet{LTGMH10}, we contributed a fine-grained computational complexity analysis of a ``non-monotone version'' of finding effectors in networks.
Indeed, we argued why we believe this to be at least as natural as the more restricted ``monotone model'' due to \citet{LTGMH10}. A particular case for this is that we may find more robust solutions, that is, solutions that are more resilient against noise.
The central point is that, other than \citet{LTGMH10}, we allow non-target nodes to be effectors as well, motivated by the assumption that knowledge about the state of a node may get lost from time to time (see \autoref{sec:Model} for further discussion). Altogether, we observed that both models suffer from computational hardness even in very restricted settings. For the case of unlimited budget, we believe that both models coincide with respect to solvability and hence with respect to a fine-grained computational complexity classification.

Our work is of purely theoretical and classification nature. One message for practical solution approaches we can provide is that it may help to get rid of some probabilistic arcs by rounding them up to 1 (making them deterministic) or rounding them down to~0 (deleting the arcs)---this could be interpreted as some form of approximate computation of effectors. Network structure restrictions seem to be less promising since our hardness results even hold for directed acyclic graphs. Still, there is hope for finding further islands of tractability, for instance by ignoring budget constraints and restricting the degree of randomness.

We leave several challenges for future research.
First, it remains to prove or disprove \autoref{con:infinite-budget}.
Moreover,
while we considered most of the parameterizations for most of the variants of the \probEffectors problem,
we have left some work for future research,
specifically the parameterized complexity of \probEffectors where there is infinite budget and arbitrarily-many probabilistic arcs
(see the corresponding open question in~\autoref{table:results}).
A further,
more general direction would be to consider other diffusion models and other cost functions.
For example,
it is also natural to maximize the probability that precisely the current activation state is achieved
when selecting the effectors to be initially active.
Moreover,
it seems as if the current diffusion model and its somewhat complicated probabilistic nature
is one of the main reasons for the intractability of our problem.
It would be interesting to consider other diffusion models,
possibly simpler ones,
and see whether it is possible to push the tractability results to apply for more cases.
Specifically,
it would be interesting to extend our results concerning the parameter ``degree of randomness'' to such models.

%

\bibliographystyle{abbrvnat}
\bibliography{bibliography}

\end{document}